\documentclass{amsart}

\usepackage[pagewise]{lineno}
\usepackage{amsthm,amssymb,verbatim,amsmath, amsfonts, amscd}
\usepackage{xcolor}
\usepackage{amssymb}
\usepackage{amsthm}

\usepackage{multirow}
\usepackage{array}
\setlength{\tabcolsep}{10pt}

\usepackage{amsthm,amssymb,verbatim,amsmath, amsfonts, amscd
,a4wide
,color}
\usepackage{ascmac}

\usepackage{verbatim}

\pagestyle{plain}

\pagestyle{plain}


\setlength{\footskip}{10mm}

\theoremstyle{plain}\newtheorem{Theorem}{Theorem}[section]
\theoremstyle{plain}\newtheorem{Conjecture}[Theorem]{Conjecture}
\theoremstyle{plain}\newtheorem{Corollary}[Theorem]{Corollary}
\theoremstyle{plain}\newtheorem{Lemma}[Theorem]{Lemma}
\theoremstyle{plain}\newtheorem{Proposition}[Theorem]{Proposition}
\theoremstyle{definition}\newtheorem{Definition}[Theorem]{Definition}
\theoremstyle{definition}\newtheorem{Example}[Theorem]{Example}
\theoremstyle{definition}
\theoremstyle{definition}
\theoremstyle{definition}
\theoremstyle{definition}\newtheorem{Remark}[Theorem]{Remark}
\theoremstyle{definition}
\theoremstyle{definition}\newtheorem{Question}[Theorem]{Question}

\def\PAut{\mathrm{PAut}}

\def\dim{\mathrm{dim}}

\newcommand{\FF}{{\mathbb{F}}}
\newcommand{\bigzero}{\mbox{\normalfont\Large\bfseries 0}}

\begin{document}

\title{On the permutation automorphisms of binary cubic codes}

\author{{Murat Altunbulak, Fatma Altunbulak Aksu, Roghayeh Hafezieh and {\.I}pek Tuvay}}
\address{DOKUZ EYL\"{U}L UNIVERSITY, DEPARTMENT OF MATHEMATICS, 35390, BUCA, \.{I}ZM\.{I}R, T\"{U}RK\.{I}YE}
\email{murat.altunbulak@deu.edu.tr}
\address{M\.{I}MAR S\.{I}NAN FINE ARTS UNIVERSITY, DEPARTMENT OF MATHEMATICS, 34380, BOMONT\.{I}, \c{S}\.{I}\c{S}L\.{I}, \.{I}STANBUL, T\"{U}RK\.{I}YE}
\email{fatma.altunbulak@msgsu.edu.tr}
\address{KOCAEL\.{I}, T\"{U}RK\.{I}YE}
\email{r.hafezieh@yahoo.com}
\address{M\.{I}MAR S\.{I}NAN FINE ARTS UNIVERSITY, DEPARTMENT OF MATHEMATICS, 34380, BOMONT\.{I}, \c{S}\.{I}\c{S}L\.{I}, \.{I}STANBUL, T\"{U}RK\.{I}YE}
\email{ipek.tuvay@msgsu.edu.tr}

\thanks{}

\keywords{binary linear code, fixed point free permutation automorphism, fixed subcode associated to a permutation automorphism}
\subjclass[2020]{94B05, 11T71, 20B05}

\begin{abstract}

 A binary linear code whose permutation automorphism group has a fixed point free permutation of order $3$ is called a binary cubic code. The scope of
 this paper is to investigate the structural properties of binary cubic codes. Let $C$ be a binary cubic $[n,k]$ code. In this paper, we prove that if $n\geq 30$ and $C$ has permutation automorphism group of order three, then $k\geq 6$. Additionally, we show that if  $n < 30$ and $k\leq 4$, then the permutation automorphism group of $C$ has order greater  than three.   Moreover, along the way, we provide some results on the structure of the  higher dimensional cubic codes. In particular, we present some
 results concerning the structure of the putative extremal self-dual $[72,36,16]$ code under the assumption that it is cubic.

\end{abstract}

\maketitle

\section{Introduction}

Given a binary linear code $C$, let $\PAut(C)$ denote its permutation automorphism group. Our motivation in this paper comes from the following question
mentioned in ~\cite{Borello2015}:

\begin{Question}
Let $C$ be a (self-dual) binary linear code and suppose that $\PAut(C)$ contains a non-trivial subgroup $G$. What can we say about $C$ knowing $G$?
\end{Question}

 A well-known result states that whenever $G$ is a non-trivial subgroup of ${\rm{PAut}}(C)$, then $C$ becomes a module over the group algebra
 ${\mathbb{F}_2}G$. This leads to study linear codes by means of different algebraic tools which is an active line of research, see for instance
 \cite{BW2022}. Indeed,  the structure or the existence of $C$ can be determined by focusing on some specific automorphisms in ${\rm{PAut}}(C)$, (see
 \cite{AksuHafeziehTuvay2023,BW2013,Yorgov1983}). In \cite{AksuHafeziehTuvay2023}, the authors considered the properties of binary linear codes of even
 length whose permutation automorphism group is a cyclic group generated by an involution. They prove that, up to dimension or co-dimension $4$, there
 is no quasi-group code whose permutation automorphism group is isomorphic to $C_2$.

  A binary linear code whose permutation automorphism group has a fixed point free permutation of order $3$ is called a binary cubic code in the
  literature (see \cite{BDNS2003}). It should be mentioned that, a famous problem in coding theory discusses the existence of a putative self-dual
  $[72,36,16]$ code. This problem comes down to the study of its permutation automorphism group for which the possibilities are $1, C_2, C_3, C_2 \times
  C_2, C_5$ (see \cite{Borello2015} for a summary of this problem).  Besides, in \cite{SB2004}, it is proved that a putative self-dual $[72,36,16]$ code
  has no automorphism of order $3$ with fixed points. In consequence, if a putative self-dual $[72,36,16]$ code, whose permutation automorphism group is
  $C_3$, exists, then it is a cubic code.

  Inspired by these results, in this paper, we consider the permutation automorphism groups of cubic codes. We first consider low dimensional cubic
  codes and prove the following theorems.

\begin{Theorem}\label{main1}  Let $C$ be a binary cubic $[n,k]$ code, where $n\geq 30$. If the order of the permutation automorphism group of $C$ is
$3$, then $k\geq 6$.

\end{Theorem}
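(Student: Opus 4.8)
The plan is to exploit the $\FF_2 C_3$-module structure that a fixed-point-free order-$3$ automorphism imposes on $C$, and to show that a small dimension $k$ forces extra automorphisms, contradicting $|\PAut(C)|=3$.

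The plan is to translate the hypothesis $|\PAut(C)|=3$ into a rigidity statement about a generator matrix $G$ of $C$. Write $G=[v_1\mid\cdots\mid v_n]$ with columns $v_i\in\FF_2^k$. A coordinate permutation $\pi$ lies in $\PAut(C)$ precisely when permuting the columns of $G$ by $\pi$ preserves its row space, i.e.\ when there is $A\in\mathrm{GL}_k(\FF_2)$ with $Av_i=v_{\pi(i)}$ for all $i$. When the columns are pairwise distinct and nonzero (so that they span $\FF_2^k$), such an $A$ is uniquely determined by $\pi$, and conversely every $A$ stabilizing the set $S=\{v_1,\dots,v_n\}\subseteq\FF_2^k\setminus\{0\}$ induces such a $\pi$; hence $\PAut(C)\cong\stab_{\mathrm{GL}_k(\FF_2)}(S)$. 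The whole argument runs through this identification, so the first task is to force $C$ into the distinct-nonzero-columns regime.

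First I would show that $|\PAut(C)|=3$, being odd, forbids both repeated and zero columns. If $v_i=v_j$ with $i\neq j$, then $c_i=c_j$ for every $c\in C$, so the transposition $(i\,j)$ fixes each codeword and in particular lies in $\PAut(C)$; this is a non-identity permutation of order $2$, contradicting $|\PAut(C)|=3$. For zero columns I would use that the cubic automorphism is fixed point free: the property ``coordinate $i$ is identically zero on $C$'' is invariant under the order-$3$ automorphism, so a single zero coordinate forces its entire orbit --- three coordinates --- to be zero, and transposing two of them again produces an involution in $\PAut(C)$. Thus $S$ consists of $n$ distinct nonzero vectors, whence $n\le 2^k-1$. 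For $k\le 4$ this gives $n\le 15<30$, contradicting $n\ge 30$; and for $k=5$ it gives $n\le 31$, which together with $3\mid n$ and $n\ge 30$ forces $n=30$.

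It then remains to rule out $k=5,\ n=30$. Here $S$ is a set of $30$ distinct nonzero vectors of $\FF_2^5$, so it omits exactly one nonzero vector $w$, i.e.\ $S=\FF_2^5\setminus\{0,w\}$. Since every element of $\mathrm{GL}_5(\FF_2)$ fixes $0$, the stabilizer of $S$ equals the stabilizer of its complement $\{0,w\}$, which is just $\stab_{\mathrm{GL}_5(\FF_2)}(w)$; by the identification above, $\PAut(C)\cong\stab_{\mathrm{GL}_5(\FF_2)}(w)$. As $\mathrm{GL}_5(\FF_2)$ acts transitively on the $31$ nonzero vectors, this stabilizer has index $31$ and order $|\mathrm{GL}_5(\FF_2)|/31=30\cdot 28\cdot 24\cdot 16$, which is vastly larger than $3$. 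This contradicts $|\PAut(C)|=3$ and completes the proof that $k\ge 6$. The main obstacle is not the counting but setting up the correspondence $\PAut(C)\cong\stab_{\mathrm{GL}_k(\FF_2)}(S)$ cleanly and disposing rigorously of the degenerate column configurations; once the fixed point free hypothesis is used to eliminate zero columns, the projective bound $n\le 2^k-1$ and the single-orbit stabilizer computation do the rest.
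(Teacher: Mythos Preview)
Your argument is correct and takes a genuinely different route from the paper's proof. The paper proceeds through Huffman's decomposition $C=F_\sigma(C)\oplus E_\sigma(C)$: it first shows $\dim E_\sigma(C)\ge 4$ (Corollary~\ref{cor:3}), then eliminates $k=4$ by a case analysis on distinct versus non-distinct $\sigma$-weight codes (Theorem~\ref{thm:3}, Proposition~\ref{prop:2}), and finally eliminates $k=5$ for $n\ge 30$ by a further bifurcation (Propositions~\ref{pro:313} and~\ref{pro:314}), in each branch explicitly constructing an involution fixing a carefully chosen basis of $C$. Your proof bypasses all of this: once repeated and zero columns are ruled out, the identification $\PAut(C)\cong\stab_{\mathrm{GL}_k(\FF_2)}(S)$ reduces $k\le 4$ to the single inequality $n\le 2^k-1\le 15$, and for $k=5$ the complement trick $\stab(S)=\stab(w)$ finishes immediately.

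Your approach is considerably shorter and more conceptual for this particular statement. What the paper's longer route buys is structural information that feeds into the rest of the article: the explicit involutions, the $\sigma$-weight dichotomy, and the block decomposition of $E_\sigma(C)$ are all reused in the treatment of $n<30$ (Theorem~\ref{main2}), in the higher-dimensional results of Section~4, and in the discussion of the putative $[72,36,16]$ code. Your projective bound $n\le 2^k-1$ is sharp enough to dispatch $n\ge 30$ but gives no leverage when $n<30$, so the paper's machinery would still be needed there. One minor remark: the parenthetical ``(so that they span $\FF_2^k$)'' is not a consequence of the columns being distinct and nonzero, but rather of $G$ having full row rank~$k$; you may want to rephrase that.
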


\begin{Theorem}\label{main2} Let $C$ be a binary cubic $[n,k]$ code, where $n<30$. If $k\leq 4$, then the order of the
permutation automorphism group of $C$ is greater than $3$.

\end{Theorem}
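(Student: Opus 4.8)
The plan is to argue by contradiction: assume $\PAut(C)$ has order exactly $3$, so that $\PAut(C)=\langle\sigma\rangle$ where $\sigma$ is the fixed point free permutation of order $3$ witnessing that $C$ is cubic, and then produce a further permutation automorphism. The first two reductions are combinatorial. Call a coordinate \emph{dead} if every codeword vanishes there; since $\sigma\in\PAut(C)$, the set of dead coordinates is a union of $\sigma$-orbits, hence has size $0$ or at least $3$. If there were two dead coordinates, their transposition would lie in $\PAut(C)\setminus\langle\sigma\rangle$, so I may assume $C$ has full support. Likewise, if two coordinates carried equal columns in a generator matrix, the corresponding transposition would again be an automorphism outside $\langle\sigma\rangle$; so I may assume the $n$ columns are distinct nonzero vectors of $\FF_2^k$, whence $n\le 2^k-1$.

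The key structural step is to pass to the column set $X\subseteq\FF_2^k\setminus\{0\}$, now with $|X|=n$ and $\langle X\rangle=\FF_2^k$. For codes with distinct nonzero columns the standard dictionary identifies $\PAut(C)$ with $\mathrm{Stab}_{\mathrm{GL}_k(\FF_2)}(X)=\{N\in\mathrm{GL}_k(\FF_2):NX=X\}$: any coordinate permutation preserving $C$ is realized by a unique linear map permuting the (spanning) columns, and conversely. Under this identification $\sigma$ becomes a matrix $M$ of order $3$ permuting $X$ with all orbits of size $3$. Since $\FF_2 C_3\cong\FF_2\times\FF_4$ is semisimple, $C\cong V_1^{a}\oplus V_2^{b}$ with $V_1,V_2$ the trivial and $2$-dimensional simple modules and $k=a+2b$; distinct columns forbid $\sigma$ from acting trivially, so $b\ge 1$. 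The fixed space of $M$ has dimension $a$, no column lies in it, and $X$ is a union of $M$-orbits among the $2^k-2^a$ non-fixed vectors, forcing $n\le 2^k-2^a\le 15<30$. Thus only finitely many configurations survive, and the hypothesis $n<30$ is automatically in range.

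It remains to treat these finitely many cases and, in each, exhibit an element of $\mathrm{Stab}_{\mathrm{GL}_k(\FF_2)}(X)$ outside $\langle M\rangle$. The constraint $b\ge1$ rules out $k=1$ at once. For $k\le 3$ the spanning condition pins $X$ down essentially uniquely (e.g.\ all nonzero vectors of $\FF_2^2$ when $k=2$; the full space when $(n,k)=(3,3)$; all six non-fixed vectors when $(n,k)=(6,3)$), and in each such case $|\mathrm{Stab}(X)|\ge 6$ is immediate. The substantive case is $k=4$, split by module type. When $a=0,b=2$ I view $\FF_2^4=\FF_4^2$ with $M$ scalar multiplication by a primitive cube root of unity; the $M$-orbits correspond to the $5$ points of $\mathbb{P}^1(\FF_4)$, the centralizer of $M$ in $\mathrm{GL}_4(\FF_2)$ is $\mathrm{GL}_2(\FF_4)$ acting as $\mathrm{PGL}_2(\FF_4)\cong A_5$ on these points, and the setwise stabilizer in $A_5$ of the subset corresponding to $X$ (of size $2,3,4$ or $5$) lifts into $\mathrm{Stab}(X)$ with order a multiple of $3$ and at least $18$. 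When $a=2,b=1$ the orbits are indexed by the four vectors of the fixed plane $F$, the relevant centralizer is $\mathrm{GL}(F)\times\FF_4^{*}\cong S_3\times C_3$, and the stabilizer of the (spanning) index set $T\subseteq F$ together with the scalar factor always has order at least $6$. In every case $|\mathrm{Stab}(X)|>3$, contradicting $|\PAut(C)|=3$.

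I expect the correspondence $\PAut(C)\cong\mathrm{Stab}_{\mathrm{GL}_k(\FF_2)}(X)$ and the orbit/centralizer bookkeeping for $k=4$ to be the crux. The genuine obstacle is organizing the $k=4$ analysis so that no spanning configuration $X$ is overlooked, which is precisely why I reduce to unions of $M$-orbits and parametrize those orbits explicitly: by $\mathbb{P}^1(\FF_4)$ in the type $a=0,b=2$, and by the fixed plane $F$ in the type $a=2,b=1$.
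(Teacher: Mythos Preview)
Your argument is correct and takes a genuinely different route from the paper. The paper works through the Huffman decomposition $C=F_\sigma(C)\oplus E_\sigma(C)$: for $k\le 3$ one has $\dim E_\sigma(C)\in\{0,2\}$, and in the latter case Theorem~\ref{thm:2} builds an explicit involution swapping the unique zero positions in each $3$-block; for $k=4$ Corollary~\ref{cor:3} forces $C=E_\sigma(C)$, after which Theorem~\ref{thm:3} and Proposition~\ref{prop:2} construct involutions block-by-block in the non-distinct and distinct $\sigma$-weight cases respectively. You instead reduce to injective spanning column sets $X\subseteq\FF_2^k\setminus\{0\}$, invoke the dictionary $\PAut(C)\cong\mathrm{Stab}_{\mathrm{GL}_k(\FF_2)}(X)$, and bound the stabilizer from below via the centralizer of the order-$3$ element, using the $\FF_2 C_3$-module decomposition $\FF_2^k\cong V_1^a\oplus V_2^b$. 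Your method is cleaner and more conceptual for small $k$, and makes the finiteness transparent (indeed $n\le 2^k-2^a\le 15$, so the hypothesis $n<30$ is never invoked---nor is it in the paper's proof); the paper's hands-on constructions, on the other hand, are what extend to the $k=5$ analysis and to the higher-dimensional results in later sections, where the Huffman decomposition continues to organize everything. One small expository slip: ``the full space when $(n,k)=(3,3)$'' is not right, since $|X|=3<7$; what you need there is that a spanning $M$-orbit of size $3$ in $\FF_2^3$ is a basis, whence its stabilizer is all of $S_3$.
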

\begin{Remark}

In \cite{LingSole}, it is proved that all cubic codes can be obtained by a generalized cubic construction from a binary code and a quaternary code, both
of the same length. Using the construction methods mentioned in \cite{LingSole} (see Theorem 5.1 and Section VI, case $m=3$ and Turyn's Construction),
we obtain some computational results for $5$-dimensional binary cubic codes of length less than $30$. Indeed, we consider cubic codes of dimension $5$
and among them, we determine the smallest order for the permutation automorphism groups of cubic codes of length $n$, where $n\in\{6,9,12,15,18,21\}$, in
SAGEMATH. For $n\in \{24, 27\}$, the calculations in SAGEMATH takes too much time, but for these lengths we have partial results which support the claim
that a $5$-dimensional cubic code of length $n\in \{24, 27\}$ has a permutation automorphism group whose order is greater than $3$. (for detailed
explanations see Remark \ref{Calc}).
\end{Remark}
Moreover, we construct some examples of binary cubic codes of dimension greater than or equal to $ 6$ having permutation automorphism groups of order $3$. This verifies that
five is a possibly strict lower bound for the dimension of the cubic code $C$ to have a permutation automorphism group of order $3$. Generalizing all
the methods to prove the results for cubic linear codes of dimension less than or equal to $5$, we obtain some results on the  structure of cubic codes
for higher dimensions. In particular, we obtain some results on the structure of a putative self-dual $[72,36,16]$ code having permutation automorphism
group of order $3$.

The structure of the paper is as follows. In Section 2, we introduce some notations and discuss lower dimensional cases. In Section 3, we present the
proofs of the main results. In Section 4, we give some general methods for higher dimensional cubic codes. We construct examples of higher dimensional
cubic codes having permutation automorphism group of order $3$. In Section 5, we presents the applications on putative extremal self-dual $[72, 36, 16]$
code. We end the paper with a conjecture which states that the automorphism groups of  a putative extremal self-dual $[72, 36, 16]$ code is not
isomorphic to $C_3$.

\section{Notation and results for lower dimensions}

A binary linear code of length $n$ and dimension $k$ is a $k$-dimensional subspace of the vector space $\FF_2^n$. The symmetric group $S_n$ acts on
$\FF_2^n$ as follows. For $\sigma\in S_n$ and for $v=v_1v_2\dots v_n$, we define $v^{\sigma}=v_{\sigma^{-1}(1)}v_{\sigma^{-1}(2)}\dots
v_{\sigma^{-1}(n)}$.
A permutation automorphism of $C$ is a permutation $\sigma\in S_n$ such that $C^\sigma=C$. The set of all automorphisms of $C$ forms a subgroup of
$S_n$, which is called the permutation automorphism group of $C$ and it is denoted by ${\rm PAut}(C)$.

Let $C$ be a binary linear code. Assume that  $\sigma$ is a fixed point free permutation of order $3$ in ${\rm  PAut}(C)$. In this case, it is clear
that $C$ has length a multiple of $3$. So for the rest of the paper, we assume that $C$ is a binary linear code of length $n=3m$ where $m$ is  a
positive integer.
Since any conjugate of a permutation automorphism of $C$ is a permutation automorphism of a code which is permutation equivalent to $C$, without loss of
generality, we can assume that $$\sigma=(1,2,3)(4,5,6)...(n-2,n-1,n).$$
Let $\sigma_i$ be the $i$th cycle of $\sigma$ so that $\sigma=\prod_{i=1}^m \sigma_i$ and let
$\Omega_i$ be the $i$th cycle set of $\sigma$ for $i=1, \ldots, m$.
Given $v$ in $C$, let $v|_{\Omega_{i}}$ denote the $i$th $3$-block of $v$ associated to the $i$th cycle set $\Omega_{i}$.
The subcode $F_{\sigma}(C)=\{v \in C \ | \ v^{\sigma}=v\}$ is called the fixed subcode of $C$ associated to $\sigma$ and there is
another subcode which is defined as follows
$$E_{\sigma}(C)=\{v \in C \ | \ \sum_{i\in \Omega_j} v_i=0 \text{ for any } j=1, \ldots m \}.$$ It is clear that both $F_{\sigma}(C)$ and
$E_{\sigma}(C)$ are $ \sigma$-invariant.

The celebrated work of Cary Huffman gives a decomposition in terms of these subcodes associated to an automorphism.

\begin{Theorem} \cite[Theorem 1]{ Huffman}  With the notations above, we have that $C=F_{\sigma}(C)\oplus E_{\sigma}(C)$.

\end{Theorem}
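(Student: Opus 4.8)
The plan is to exhibit an explicit idempotent operator that realizes the projection onto $F_{\sigma}(C)$, exploiting the fact that the order of $\sigma$ is coprime to the characteristic. View $\sigma$ as the $\FF_2$-linear operator $v \mapsto v^{\sigma}$ on $\FF_2^n$ and set $e = 1 + \sigma + \sigma^2$, i.e. $e(v) = v + v^{\sigma} + v^{\sigma^2}$. The first step is to check that $e$ is idempotent: since $\sigma^3 = \id$ we have $\sigma e = \sigma + \sigma^2 + \sigma^3 = e$, hence $e^2 = (1+\sigma+\sigma^2)e = 3e = e$, the last equality because $3 = 1$ in $\FF_2$. This is precisely where the hypothesis that $\sigma$ has order $3$, coprime to $2$, enters; in odd characteristic one would instead normalize by $\tfrac13$, and abstractly this is just Maschke semisimplicity of $\FF_2\langle\sigma\rangle$.

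Next I would identify the image and kernel of $e$ block by block. On a single cycle set $\Omega_j$, writing the block as $(a,b,c)$ and using $\sigma=(1,2,3)(4,5,6)\cdots$, one computes $e(a,b,c) = (a+b+c)(1,1,1)$. Consequently the image of $e$ consists of exactly the vectors that are constant on each block, which are precisely the $\sigma$-fixed vectors, while $\ker e = \{v : \sum_{i\in\Omega_j} v_i = 0 \text{ for all } j\}$. Thus on the ambient space $\FF_2^n$ the operator $e$ is the projection onto the fixed space along the block-sum-zero space, and the complementary orthogonal idempotent $1+e$ (note $e+(1+e)=1$ and $e(1+e)=0$ in characteristic $2$) projects onto the block-sum-zero space.

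The final step is to restrict this decomposition to $C$. Because $C$ is $\sigma$-invariant it is closed under $e$ and $1+e$; so for $v\in C$ we have $e(v)\in C$ and $e(v)$ is $\sigma$-fixed, whence $e(v)\in F_{\sigma}(C)$, while $(1+e)(v)\in C$ has vanishing block-sums, whence $(1+e)(v)\in E_{\sigma}(C)$. From $v = e(v) + (1+e)(v)$ we get $C = F_{\sigma}(C) + E_{\sigma}(C)$, and the sum is direct since any common element lies in the image of $e$ and of $1+e$, forcing it to be $0$; equivalently, a nonzero fixed vector has some block equal to $(1,1,1)$, which has block-sum $1\neq 0$, so it cannot lie in $E_{\sigma}(C)$. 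I do not expect a genuine obstacle here: the whole argument turns on the single arithmetic fact $3\equiv 1 \pmod 2$, which upgrades $e = 1+\sigma+\sigma^2$ to an honest idempotent, and the only care needed is the routine block-wise bookkeeping that pins down $\mathrm{im}\,e$ and $\ker e$ as the two named subcodes.
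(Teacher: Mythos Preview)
Your proof is correct. Note, however, that the paper does not supply its own proof of this statement: it is quoted as \cite[Theorem 1]{Huffman} and left unproved in the text, so there is nothing here to compare your argument against directly. What you have written is the standard Maschke-type argument specialized to $\FF_2$ and an automorphism of order $3$ (the idempotent $e=1+\sigma+\sigma^2$ works precisely because $3\equiv 1\pmod 2$), and this is indeed the idea underlying Huffman's original, more general result.
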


Moreover, Huffman proves the following result which plays an important role in our discussion.

\begin{Lemma}\cite[Corollary 1]{Huffman}~\label{cor:1}
Let $C$ be a linear code whose base field is $\mathbb{F}_{q}$ and let $\sigma$ be a permutation
in ${\rm PAut}(C)$. If the order of $\sigma$ is a power of a prime number $s$, then the multiplicative
order of $q$ modulo $s$, divides $\dim (E_{\sigma}(C))$.
\end{Lemma}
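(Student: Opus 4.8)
The plan is to realize $E_{\sigma}(C)$ as a module over a semisimple $\FF_q$-algebra and then read off the divisibility from the $\FF_q$-dimensions of its simple constituents. Write $N=s^a$ for the order of $\sigma$; since the statement refers to the multiplicative order of $q$ modulo $s$, we may assume $\gcd(q,s)=1$ throughout. Letting $x$ act by $\sigma$ turns $C$ into a module over $R=\FF_q[x]/(x^N-1)$. The first step is to locate $E_{\sigma}(C)$ inside this module structure: forming the sum of the entries of $v$ over a cycle set $\Omega_j$ is exactly the effect, block by block, of the norm operator $\nu=1+\sigma+\sigma^2+\cdots+\sigma^{N-1}$, so that $E_{\sigma}(C)=\ker(\nu|_C)$. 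Because $1+x+\cdots+x^{N-1}=(x^N-1)/(x-1)$, this identifies $E_{\sigma}(C)$ with the submodule of $C$ annihilated by $g(x):=(x^N-1)/(x-1)$, consistently with the decomposition $C=F_{\sigma}(C)\oplus E_{\sigma}(C)$, on whose fixed part $\nu$ acts invertibly.

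Next I would exploit $\gcd(q,s)=1$. This guarantees that $x^N-1$ is separable over $\FF_q$, hence $g(x)=\prod_{j=1}^{a}\Phi_{s^j}(x)$ is squarefree, the divisors of $N=s^a$ exceeding $1$ being precisely the prime powers $s,s^2,\ldots,s^a$. Consequently $\FF_q[x]/(g(x))$ is a product of finite fields, so it is semisimple, and $E_{\sigma}(C)$, as a module over it, decomposes as a direct sum of simple modules. Each simple summand is a copy of a field $\FF_q[x]/(p(x))$, where $p$ is an irreducible factor over $\FF_q$ of some cyclotomic polynomial $\Phi_{s^j}$; its $\FF_q$-dimension equals $\deg p$, which by the standard factorization of cyclotomic polynomials over finite fields is the multiplicative order $\mathrm{ord}_{s^j}(q)$ of $q$ modulo $s^j$.

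To finish, I would invoke the elementary fact that $\mathrm{ord}_{s}(q)\mid \mathrm{ord}_{s^j}(q)$ for every $j\geq 1$, since a congruence $q^{t}\equiv 1\pmod{s^j}$ forces $q^{t}\equiv 1\pmod{s}$. Thus every simple constituent of $E_{\sigma}(C)$ has $\FF_q$-dimension divisible by $\mathrm{ord}_{s}(q)$, and summing over the constituents yields $\mathrm{ord}_{s}(q)\mid \dim(E_{\sigma}(C))$. I expect the steps requiring the most care to be the translation between the combinatorial definition of $E_{\sigma}(C)$ via cycle-set sums and its description as $\ker g(\sigma)$, and the input that the irreducible factors of $\Phi_d$ over $\FF_q$ all have degree $\mathrm{ord}_d(q)$; once these are in place the divisibility is immediate, and it is exactly the hypothesis $N=s^a$ that forces only the orders $\mathrm{ord}_{s^j}(q)$, each a multiple of $\mathrm{ord}_{s}(q)$, to appear.
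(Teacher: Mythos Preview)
The paper does not supply its own proof of this lemma; it is quoted verbatim as \cite[Corollary~1]{Huffman} and used as a black box. Your argument is a correct self-contained reconstruction along the standard lines of Huffman's module-theoretic approach: view $E_{\sigma}(C)$ as a module over the semisimple algebra $\FF_q[x]/(g(x))$ with $g(x)=(x^N-1)/(x-1)$, and read off the divisibility from the degrees of the irreducible factors of the cyclotomic pieces $\Phi_{s^j}$.

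One small point worth tightening, which you already flag: the identification of $E_{\sigma}(C)$ with $\ker g(\sigma)$ via the norm operator $\nu$ is literally ``sum over each cycle'' only when every cycle of $\sigma$ has length exactly $N$. In general a cycle of length $\ell\mid N$ contributes the cycle-sum multiplied by $N/\ell$; since $N/\ell$ is a power of $s$ and you have assumed $\gcd(q,s)=1$, this scalar is invertible in $\FF_q$ and the kernel is unchanged. With that remark made explicit the argument is complete.
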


Let $C$ be a linear code whose permutation automorphism group contains a fixed point free permutation $\sigma$ of order $3$. Such a code is called cubic
(see \cite{BDNS2003,LingSole}). Moreover, Lemma ~\ref{cor:1} implies that $2$ divides ${\rm dim}(E_{\sigma}(C))$. We start our discussion introducing
some notations which will be used in our proofs.

Given a non-zero vector $v$ in $E_{\sigma}(C)$, assume that $v|_{\Omega_{i}}$ is a non-zero $3$-block of $v$, then it follows from the definition of
$E_\sigma(C)$ that there are exactly two $1$'s in this block. Let $a_{i,v,1}$, $a_{i,v,2}$, and $a_{i,v,3}$ denote the positions of the first $1$, the
second $1$, and zero in the $3$-block $v|_{\Omega_{i}}$, respectively. Moreover, let ${\rm J}_{v}$ be the set of indices $i$ for which $v|_{\Omega_{i}}$
is a non-zero $3$-block.
Assume $0\neq w\in E_{\sigma}(C)$ is a vector different from $v$ whose weight is equal to that of $v$. If $wt(v)=wt(w)=2m$, then this implies that
neither $v$ nor $w$ has a zero $3$-block. If $wt(v)=wt(w)=2t < 2m$, observe that $|{\rm J}_{v}\setminus {\rm J}_{w}|=|{\rm J}_{w}\setminus {\rm
J}_{v}|$. Let $d:=|{\rm J}_{v}\setminus {\rm J}_{w}|=|{\rm J}_{w}\setminus {\rm J}_{v}|$. If $ {\rm J}_{v}\setminus {\rm J}_{w}\neq \emptyset$, then
$d\geq 1$. Moreover  if $i\in{\rm J}_{v}\setminus {\rm J}_{w}$, then  $v|_{\Omega_{i}}$ is a non-zero $3$-block while $w|_{\Omega_{i}}$ is a zero
$3$-block. Let $${\rm J}_{v}\setminus {\rm J}_{w}:=\{r_{w,1},...,r_{w,d}\}$$ and $${\rm J}_{w}\setminus {\rm J}_{v}:=\{r_{v,1},...,r_{v,d}\}$$
    where for each $j\in\{1,...,d\}$, the symbols $r_{v,j}$ denote the $j$th zero $3$-block in $v$ whose associated $3$-block in $w$ is non-zero.

\begin{Example}~\label{example:1} Let $\sigma=(1,2,3)(4,5,6)(7,8,9)(10,11,12)(13,14,15)(16,17,18)$ and
 $C$  a binary linear code spanned by the following vectors:
$$v=110 000 101 011 000 000,$$ $$w=000 101 110 000 000 101,$$ $$v^{\sigma}=011 000 110 101 000 000,$$ $$w^{\sigma}=000 110 011 000 000 110.$$
 It is clear that $wt(v)=wt(w)=6< 2m=12$. Moreover, we have that  $\sigma \in {\rm PAut}(C)$ and $C=E_{\sigma}(C)$.
 On the other hand $J_{v}\backslash J_{w}=\{1, 4\}$, $J_{w}\backslash J_{v}=\{2,6\}$, $r_{w,1}=1$, $r_{w,2}=4$, $a_{r_{w,1},v,1}=a_{1,v,1}=1$, and
 $a_{r_{v,1},w,1}=a_{2,w,1}=4$.

 \end{Example}

We start our investigation with the simplest case for completeness.
If $C$ is a $1$-dimensional binary linear code of length $n$, then $\PAut(C)$ is not generated by a fixed point free permutation of order $3$. Because
if $v$ is the non-zero vector of $C$, then $\PAut(C)$ is equal to stabilizer of $v$ which is isomorphic to $S_b \times S_{n-b}$ where
$b$ is equal to the weight of $v$.

\begin{Lemma}\label{sum}
Let $C$ be a binary linear code and $\sigma$ a fixed point free automorphism of $C$ of order $3$. Then
for any $v\in E_{\sigma}(C)$, we have that $v^{\sigma^2}+v^{\sigma}+v=0$.
\end{Lemma}

\begin{proof}
We claim that  $(v^{\sigma^2}+v^{\sigma}+v)|_{\Omega_j}=000$ for each $j=1, 2, \ldots, m$. Since $C$ is a binary code, the
definition of $E_{\sigma}(C)$ implies that $v|_{\Omega_j}$ is either a zero $3$-block or it has exactly two $1$'s. If it is a zero $3$-block,
our claim holds trivially. If $v|_{\Omega_j}$ has exactly two $1's$, then in every position of the block $\Omega_j$, exactly two of
the vectors $v^{\sigma^2}$, $v^{\sigma}$ and $v$ have value equal to $1$. So the claim holds and the proof is finished since
$\sigma$ is fixed point free.
\end{proof}
\vspace{.3cm}
\begin{Theorem}~\label{thm:2}
Let $C$ be a $k$-dimensional binary linear code where $k\geq 2$ and let $\sigma$ be a fixed point free permutation of order $3$ in ${\rm PAut}(C)$. If
${\rm dim}(E_{\sigma}(C))=2$, then there exists an involution $\theta$ in ${\rm PAut}(E_{\sigma}(C))$ which fixes every element of $F_{\sigma}(C)$.
Indeed, $\theta\in {\rm PAut}(C)$.
\end{Theorem}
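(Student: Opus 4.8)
The plan is to build $\theta$ explicitly as a product of coordinate transpositions supported inside the $3$-blocks $\Omega_j$, exploiting the rigidity forced by $\dim(E_{\sigma}(C))=2$. The guiding observation is that within-block permutations automatically fix $F_{\sigma}(C)$ pointwise, so the only real constraint is that $\theta$ should preserve $E_{\sigma}(C)$.

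First I would pin down the structure of $E_{\sigma}(C)$. By Lemma~\ref{cor:1} its dimension is even, so $\dim(E_{\sigma}(C))=2$ is the minimal nonzero case and $E_{\sigma}(C)=\{0,v,v^{\sigma},v^{\sigma^2}\}$ for any nonzero $v$. Since the sum $C=F_{\sigma}(C)\oplus E_{\sigma}(C)$ is direct, $\sigma$ fixes no nonzero element of $E_{\sigma}(C)$; hence it cyclically permutes the three nonzero vectors, and Lemma~\ref{sum} gives $v^{\sigma^2}=v+v^{\sigma}$. Reading this block by block, on every block $\Omega_j$ either all three vectors vanish, or they restrict to the three distinct weight-$2$ patterns $\{110,011,101\}$, one for each of $v,v^{\sigma},v^{\sigma^2}$; in particular $v$, $v^{\sigma}$, $v^{\sigma^2}$ share a common set of active blocks.

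Next I would define $\theta$ block by block. On each active block I choose the unique transposition of two of its three coordinates that fixes $v|_{\Omega_j}$ and interchanges the remaining two patterns (which are precisely $v^{\sigma}|_{\Omega_j}$ and $v^{\sigma^2}|_{\Omega_j}$); on the inactive blocks $\theta$ acts as the identity. As these transpositions live in disjoint blocks they commute, so $\theta$ is a genuine involution, non-trivial because $v\neq 0$ forces at least one active block. By construction $\theta$ fixes $v$ and swaps $v^{\sigma}$ with $v^{\sigma^2}$, so it permutes the three nonzero elements of $E_{\sigma}(C)$ and hence $\theta\in\mathrm{PAut}(E_{\sigma}(C))$. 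For the remaining claims, note that every element of $F_{\sigma}(C)$ is constant on each block (its blocks are $000$ or $111$), and such blocks are invariant under any within-block permutation, so $\theta$ fixes $F_{\sigma}(C)$ pointwise; then for $c=f+e$ with $f\in F_{\sigma}(C)$, $e\in E_{\sigma}(C)$, linearity of the coordinate action gives $c^{\theta}=f+e^{\theta}\in C$, whence $C^{\theta}=C$ and $\theta\in\mathrm{PAut}(C)$.

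The weight-$2$ bookkeeping inside a single block is routine. The main point to get right is \emph{coherence}: verifying that the local rule ``fix $v|_{\Omega_j}$, swap the other two'' assembles into one global involution that simultaneously fixes $v$ and interchanges $v^{\sigma}$ and $v^{\sigma^2}$, and that fixing $F_{\sigma}(C)$ together with preserving $E_{\sigma}(C)$ is exactly enough, via the Huffman decomposition, to place $\theta$ in $\mathrm{PAut}(C)$.
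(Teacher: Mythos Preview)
Your proof is correct and follows essentially the same approach as the paper: both construct $\theta$ as a product of transpositions, one per active $3$-block, and then observe that any within-block permutation fixes the $000/111$ blocks of $F_{\sigma}(C)$. The only cosmetic difference is the role assigned to the three nonzero vectors: the paper chooses the transposition on $\Omega_j$ that swaps the zero-position of $v|_{\Omega_j}$ with the zero-position of $v^{\sigma}|_{\Omega_j}$, so its $\theta$ interchanges $v$ and $v^{\sigma}$ while fixing $v^{\sigma^2}$, whereas your $\theta$ fixes $v$ and interchanges $v^{\sigma}$ with $v^{\sigma^2}$; this is merely a relabelling of which nonzero element of $E_{\sigma}(C)$ plays the distinguished role.
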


\begin{proof}
Let $v$ be a non-zero vector in $E_{\sigma}(C)$. Since $v^{\sigma}\neq v$ and $E_{\sigma}(C)$ is $\sigma$-invariant,
letting $w:=v^{\sigma}$, we deduce that $\{v, w\}$ is a basis for $E_{\sigma}(C)$.

First assume that $v$ has no zero $3$-block.
Since $\sigma=(1,2,3)\ldots (n-2,n-1,n)$, and $v\in E_{\sigma}(C)$,
in each $3$-block of $v$ we have a unique zero.

 Let $\alpha_{i}=(a_{i,v,3},a_{i,w,3})$, where $a_{i,v,3}$ denotes the position of the unique zero in the $3$-block $v|_{\Omega_{i}}$, for each
 $i\in\{1,...,m\}$. Define $\theta:=\Pi_{i=1}^{m}\alpha_{i}$. The fact that $(v|_{\Omega_{i}})^{\alpha_{i}}=w|_{\Omega_{i}}$ for each $i\in\{1,...,m\}$
 implies that $v^{\theta}=w$. In particular, ${\rm PAut}(E_{\sigma}(C))$ has an involution. If $C=E_{\sigma}(C)$, we are done with the proof, so assume
 that ${\rm dim}(F_{\sigma}(C))\geq 1$. Let $z$ be an arbitrary non-zero vector in $F_{\sigma}(C)$. The definition of $F_{\sigma}(C)$ verifies that all
 $3$-blocks of $z$ are either $000$ or $111$. Since, $\theta$ stabilizes these $3$-blocks, we deduce that $z^{\theta}=z$ for any $z\in F_{\sigma}(C)$.
 Hence $\theta\in {\rm PAut}(C)$.

It should be mentioned that $v$ has a zero $3$-block if and only if $v^{\sigma}$ and $v^{\sigma^{2}}$ have a zero $3$-block at the same position. So it
is enough to concentrate on non-zero $3$-blocks of $v$ and apply the previous argument to non-zero blocks of $v$ in order to obtain an involution in
$\PAut(C)$.

\end{proof}

\begin{Corollary}~\label{cor:2}
Let $k\in \{2, 3\}$. There exists no $k$-dimensional binary linear code $C$ which satisfies
${\PAut}(C)=\langle\sigma\rangle$ where  $\sigma$ is a fixed point free permutation of order $3$.
\end{Corollary}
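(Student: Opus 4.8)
The plan is to reduce both cases $k\in\{2,3\}$ to the single situation $\dim(E_{\sigma}(C))=2$ and then invoke Theorem~\ref{thm:2} to manufacture an involution in $\PAut(C)$, which is incompatible with $\PAut(C)=\langle\sigma\rangle\cong C_3$. The whole argument is essentially a dimension-count reduction resting on the two earlier results, so I expect it to be short.

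First I would pin down the parity of $\dim(E_{\sigma}(C))$. Applying Lemma~\ref{cor:1} with $q=2$ and $s=3$, the multiplicative order of $2$ modulo $3$, which equals $2$, divides $\dim(E_{\sigma}(C))$; hence $\dim(E_{\sigma}(C))$ is even. Since $\dim(E_{\sigma}(C))\leq\dim(C)=k\leq 3$, this forces $\dim(E_{\sigma}(C))\in\{0,2\}$ in both cases.

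Next I would rule out $\dim(E_{\sigma}(C))=0$. In that case the decomposition $C=F_{\sigma}(C)\oplus E_{\sigma}(C)$ gives $C=F_{\sigma}(C)$, so every codeword has all of its $3$-blocks equal to $000$ or $111$. But then the transposition $\tau=(1,2)$, which permutes two coordinates inside a single cycle of $\sigma$, fixes every such block and hence every codeword, so $\tau\in\PAut(C)$ is an involution. This contradicts $\PAut(C)=\langle\sigma\rangle$, a group of order $3$. Therefore $\dim(E_{\sigma}(C))=2$ in both cases, with $\dim(F_{\sigma}(C))=0$ when $k=2$ and $\dim(F_{\sigma}(C))=1$ when $k=3$.

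Finally, with $\dim(E_{\sigma}(C))=2$, Theorem~\ref{thm:2} furnishes an involution $\theta\in\PAut(C)$, acting as an involution on $E_{\sigma}(C)$ and fixing $F_{\sigma}(C)$ pointwise. Since $\langle\sigma\rangle\cong C_3$ contains no element of order $2$, this contradicts $\PAut(C)=\langle\sigma\rangle$, completing the proof. The only mild subtlety — and the step I would be most careful about — is the degenerate case $\dim(E_{\sigma}(C))=0$, where one must observe that a within-block transposition already produces an unwanted involution, so this case cannot occur under the hypothesis $\PAut(C)=\langle\sigma\rangle$.
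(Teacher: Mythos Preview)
Your proof is correct and follows essentially the same approach as the paper's: reduce to $\dim(E_{\sigma}(C))\in\{0,2\}$ via Lemma~\ref{cor:1}, dispose of the case $C=F_{\sigma}(C)$ by exhibiting an involution, and then invoke Theorem~\ref{thm:2}. The paper's version is terser (it simply asserts that $C=F_{\sigma}(C)$ is impossible under the hypothesis and cites Lemma~\ref{cor:1} and Theorem~\ref{thm:2}), whereas you spell out the within-block transposition explicitly, but the logic is the same.
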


\begin{proof} Notice that if $C$ is a binary linear code such that $C=F_{\sigma}(C)$, then the case $\PAut(C)=\langle \sigma \rangle$ does not occur.
The rest is just a consequence of Lemma ~\ref{cor:1} and Theorem ~\ref{thm:2}.

\end{proof}

\begin{Corollary}~\label{cor:3}
Let $C$ be a $k$-dimensional binary linear code where
${\rm PAut}(C)=\langle\sigma\rangle$ and $\sigma$ is a fixed point free permutation of order $3$. Then $k\geq 4$ and ${\rm dim}(E_{\sigma}(C))\geq 4$.
\end{Corollary}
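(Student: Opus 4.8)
The plan is to read off both inequalities from the structure theorems already established, the key point being to pin down the parity and the smallest admissible value of $\dim(E_{\sigma}(C))$. Since $C=F_{\sigma}(C)\oplus E_{\sigma}(C)$ by Huffman's theorem, we have $k=\dim(F_{\sigma}(C))+\dim(E_{\sigma}(C))\geq \dim(E_{\sigma}(C))$, so it suffices to prove the stronger claim $\dim(E_{\sigma}(C))\geq 4$; the bound $k\geq 4$ then follows at once.

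First I would establish that $\dim(E_{\sigma}(C))$ is even. This is exactly Lemma~\ref{cor:1} applied with $q=2$ and $s=3$: the multiplicative order of $2$ modulo $3$ equals $2$ (since $2^{2}\equiv 1 \pmod 3$ while $2\not\equiv 1 \pmod 3$), and this order divides $\dim(E_{\sigma}(C))$.

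Next I would rule out the two small even values $0$ and $2$. If $\dim(E_{\sigma}(C))=0$, then $C=F_{\sigma}(C)$, so every codeword has all of its $3$-blocks equal to $000$ or $111$; but then any transposition interchanging two coordinates inside a single cycle set of $\sigma$ fixes every codeword and is therefore a permutation automorphism of $C$, so $\PAut(C)\supsetneq\langle\sigma\rangle$, contradicting the hypothesis (this is precisely the observation used in the proof of Corollary~\ref{cor:2}). If $\dim(E_{\sigma}(C))=2$, then since $k\geq\dim(E_{\sigma}(C))=2$ the hypotheses of Theorem~\ref{thm:2} are met, and that theorem produces an involution $\theta\in\PAut(C)$. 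As $\langle\sigma\rangle$ has order $3$ and hence contains no element of order $2$, this $\theta$ witnesses $\PAut(C)\neq\langle\sigma\rangle$, again a contradiction. Because $\dim(E_{\sigma}(C))$ is even and different from both $0$ and $2$, we conclude $\dim(E_{\sigma}(C))\geq 4$, and therefore $k\geq\dim(E_{\sigma}(C))\geq 4$.

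I expect essentially no serious obstacle here, since every ingredient is already available; the only care needed is the bookkeeping that the forbidden values of $\dim(E_{\sigma}(C))$ are exactly the even numbers below $4$, so that excluding $0$ and $2$ really does force the bound $4$. One could alternatively invoke Corollary~\ref{cor:2} directly to obtain $k\geq 4$, but routing both conclusions through the single inequality $\dim(E_{\sigma}(C))\geq 4$ keeps the argument uniform and self-contained.
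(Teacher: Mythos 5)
Your proof is correct and takes essentially the same route as the paper: parity of $\dim(E_{\sigma}(C))$ from Lemma~\ref{cor:1}, exclusion of $\dim(E_{\sigma}(C))=2$ via the involution produced by Theorem~\ref{thm:2}, and exclusion of the case $C=F_{\sigma}(C)$ by a transposition inside a cycle set. The only difference is bookkeeping, and it is a welcome one: you deduce $k\geq 4$ from $k\geq \dim(E_{\sigma}(C))\geq 4$ using Huffman's decomposition and you rule out $\dim(E_{\sigma}(C))=0$ explicitly, whereas the paper cites Corollary~\ref{cor:2} for $k\geq 4$ and leaves the $\dim(E_{\sigma}(C))=0$ case implicit in the remark opening that corollary's proof.
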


\begin{proof}
Due to Corollary ~\ref{cor:2} we come to the conclusion that $k\geq 4$. Moreover, Lemma ~\ref{cor:1} verifies that $2$ divides $\dim(E_{\sigma}(C))$.
Now, Theorem ~\ref{thm:2} guarantees that ${\rm dim}(E_{\sigma}(C))\neq 2$, hence ${\rm dim}(E_{\sigma}(C))\geq 4$.

\end{proof}

\section{Proof of the main results}

We start with recording the following elementary fact.

\begin{Lemma}~\label{lem:1}
Let $C$ be a binary linear code and $\sigma$ a fixed point free permutation of order $3$ in ${\rm PAut}(C)$. Let $D$ be a $2r$-dimensional
$\sigma$-invariant subspace of $E_{\sigma}(C)$. Then a basis of $D$ can be chosen as $\beta=\{v_{(1)},..., v_{(r)}, v_{(1)}^{\sigma},...,
v_{(r)}^{\sigma}\}$, where $v_{(i+1)}\notin \langle v_{(1)},..., v_{(i)}, v_{(1)}^{\sigma},..., v_{(i)}^{\sigma}\rangle$, for each $i\in\{1,...,r-1\}$.
\end{Lemma}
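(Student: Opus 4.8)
The plan is to build the basis greedily, verifying at each step that adjoining a new vector together with its $\sigma$-image preserves linear independence. The conceptual reason this succeeds is that Lemma~\ref{sum} forces $1+\sigma+\sigma^2$ to act as zero on $E_{\sigma}(C)$, so that $E_{\sigma}(C)$, and hence the $\sigma$-invariant subspace $D$, carries the structure of a vector space over $\FF_2[x]/(x^2+x+1)\cong\FF_4$ with $x$ acting as $\sigma$; the requested basis is then precisely the $\FF_2$-basis associated to an $\FF_4$-basis $v_{(1)},\dots,v_{(r)}$ of $D$, since $\{v,v^{\sigma}\}$ is an $\FF_2$-basis of the $\FF_4$-line through any nonzero $v$. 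I would nonetheless carry out the argument by hand, so as to produce vectors satisfying the stated non-membership condition directly.

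First I would record the base case: for any nonzero $v\in E_{\sigma}(C)$ one has $v^{\sigma}\neq v$, because $v^{\sigma}=v$ would place $v$ in $F_{\sigma}(C)\cap E_{\sigma}(C)=0$ by Huffman's decomposition $C=F_{\sigma}(C)\oplus E_{\sigma}(C)$. Hence, choosing any nonzero $v_{(1)}\in D$, the set $\{v_{(1)},v_{(1)}^{\sigma}\}$ is linearly independent and spans a $2$-dimensional $\sigma$-invariant subspace of $D$.

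Next I set up the induction. Suppose for some $i$ with $1\le i<r$ I have already produced $v_{(1)},\dots,v_{(i)}$ such that $V_i:=\langle v_{(1)},\dots,v_{(i)},v_{(1)}^{\sigma},\dots,v_{(i)}^{\sigma}\rangle$ is $\sigma$-invariant of dimension exactly $2i$. Since $2i<2r=\dim D$, I may choose $v_{(i+1)}\in D\setminus V_i$, which is exactly the stated condition $v_{(i+1)}\notin V_i$. The heart of the proof is to show that this choice forces $v_{(i+1)}^{\sigma}\notin\langle V_i,v_{(i+1)}\rangle$, so that $\{v_{(1)},\dots,v_{(i+1)},v_{(1)}^{\sigma},\dots,v_{(i+1)}^{\sigma}\}$ consists of $2(i+1)$ linearly independent vectors and $V_{i+1}$ is again $\sigma$-invariant of the correct dimension. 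I argue by contradiction: if $v_{(i+1)}^{\sigma}=u+\varepsilon v_{(i+1)}$ with $u\in V_i$ and $\varepsilon\in\{0,1\}$, then when $\varepsilon=0$ the $\sigma$-invariance of $V_i$ gives $v_{(i+1)}=u^{\sigma^2}\in V_i$, while when $\varepsilon=1$ Lemma~\ref{sum} yields $v_{(i+1)}^{\sigma^2}=v_{(i+1)}+v_{(i+1)}^{\sigma}=u\in V_i$, whence $v_{(i+1)}=u^{\sigma}\in V_i$; either case contradicts $v_{(i+1)}\notin V_i$.

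Iterating this step until $\dim V_i=2r$ produces the desired basis $\beta$. I expect the only delicate point to be the case analysis in the inductive step, where both possibilities $\varepsilon=0$ and $\varepsilon=1$ must be ruled out using the relation $v^{\sigma^2}=v+v^{\sigma}$ from Lemma~\ref{sum} together with the $\sigma$-invariance (and hence $\sigma^2$-invariance) of $V_i$; the remainder is routine bookkeeping about dimensions.
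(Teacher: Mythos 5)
Your proof is correct and takes essentially the same route as the paper's: a greedy construction whose inductive step rules out $v_{(i+1)}^{\sigma}\in\langle V_i,v_{(i+1)}\rangle$ by exactly the case analysis the paper performs on the coefficients of $v_{(r)}$ and $v_{(r)}^{\sigma}$, in both cases relying on Lemma~\ref{sum} and the $\sigma$-invariance of the partial span. Your opening remark that $E_{\sigma}(C)$ is a vector space over $\FF_2[x]/(x^2+x+1)\cong\FF_4$ is a pleasant conceptual addition not made in the paper, but the argument you actually execute coincides with the paper's.
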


\begin{proof}

We first claim that $V:=\langle v_{(1)},..., v_{(r)}, v_{(1)}^{\sigma},..., v_{(r)}^{\sigma}\rangle$ is a $\sigma$-invariant vector space.
Let $y\in V$, then $$y=(\sum_{i=1}^r a_i v_{(i)})+ (\sum_{i=1}^r b_i v_{(i)}^{\sigma})$$ for some $a_i, b_i \in \FF_2$. It follows that
$$y^\sigma=(\sum_{i=1}^r a_i v_{(i)}^\sigma)+ (\sum_{i=1}^r b_i v_{(i)}^{\sigma^2}).$$
But by Lemma \ref{sum}, $v_{(i)}^{\sigma^2}=v_{(i)}+v_{(i)}^\sigma$, hence we get that
$$y^\sigma=(\sum_{i=1}^r b_i v_{(i)})+ (\sum_{i=1}^r (a_i+ b_i) v_{(i)}^{\sigma}),$$
so the claim follows.

We next claim that $\{v_{(1)},..., v_{(r)}, v_{(1)}^{\sigma},..., v_{(r)}^{\sigma}\}$ is a linearly independent set.
This is clear for $r=1$ since $v\neq v^{\sigma}$ for any non-zero $v\in E_{\sigma}(C)$. For $r>1$,
let $$(\sum_{i=1}^r a_i v_{(i)})+ (\sum_{i=1}^r b_i v_{(i)}^{\sigma})=0$$ for $a_i, b_i \in \FF_2$. Assume at least one of $a_r$ or $b_r$ is
non-zero. Then we have that
$$a_r v_{(r)} + b_r  v_{(r)}^{\sigma}=(\sum_{i=1}^{r-1} a_i v_{(i)})+ (\sum_{i=1}^{r-1} b_i v_{(i)}^{\sigma}). \quad (**) $$
 Let's analyse possible cases. If $a_r=1$ and $b_r=0$, then $(**)$ implies that $$v_{(r)} \in \langle v_{(1)},..., v_{(r-1)}, v_{(1)}^{\sigma},...,
 v_{(r-1)}^{\sigma}\rangle $$ a
contradiction. If $a_r=0$ and $b_r=1$, then $(**)$ implies that $$v_{(r)}^{\sigma} \in \langle v_{(1)},..., v_{(r-1)}, v_{(1)}^{\sigma},...,
v_{(r-1)}^{\sigma}\rangle$$ but by
the previous paragraph since $\langle v_{(1)},..., v_{(r-1)}, v_{(1)}^{\sigma},..., v_{(r-1)}^{\sigma}\rangle$ is $\sigma$-invariant, this is
impossible. If $a_r=1$ and $b_r=1$, then $(**)$ implies that $$v_{(r)}^{\sigma^2} =v_{(r)}+v_{(r)}^{\sigma} \in \langle v_{(1)},..., v_{(r-1)},
v_{(1)}^{\sigma},..., v_{(r-1)}^{\sigma}\rangle$$
but again $\sigma$-invariance gives a contradiction. Hence the claim is established.

Now choose a non-zero vector $v \in D$ and set $v_{(1)}:=v$. Then choose $v_{(2)}\in D$ such that $v_{(2)} \notin \langle v_{(1)},
v_{(1)}^{\sigma}\rangle$
and continue this process until $D=\langle v_{(1)},..., v_{(r)}, v_{(1)}^{\sigma},..., v_{(r)}^{\sigma}\rangle$. This finishes the proof.

\end{proof}

For a permutation automorphism $\sigma$ of $\PAut(C)$, it is easy to see that all codewords lying in the
same orbit of $\langle \sigma\rangle$ have equal weights.

\begin{Definition}
 Let $C$ be a binary linear code and $\sigma\in {\rm PAut}(C)$.
$C$ is called a distinct $\sigma$-weight code if any two $\langle \sigma \rangle$-orbits of $C$
have different weights. Otherwise, $C$ is called a non-distinct $\sigma$-weight code.

\end{Definition}

\begin{Example}
Let $C$ be a binary linear code spanned by the following vectors: $$v_{1}=111000, \ v_{2}=101101, \ v_{3}=110110.$$ Using ${\rm MAGMA}$, we find out
that $|{\rm PAut}(C)|=2^{3} \cdot 3$. Indeed, $\sigma=(1,2,3)(4,5,6)$ is in $\PAut(C)$. Moreover, $C$ is a non-distinct $\sigma$-weight code as $v_{1}$
and ${v_1+v_2}$ are in two different $\langle \sigma\rangle$-orbits but their weights are equal.

\end{Example}

 \vspace{0.3cm}

\begin{Proposition}~\label{prop:1}
Let $C$ be a binary linear code such that ${\rm PAut}(C)$ contains $\sigma$ which is a
fixed point free permutation of order $3$. If $E_{\sigma}(C)$ is a $4$-dimensional
distinct $\sigma$-weight code, then the length of $C$ is at least $30$.
\end{Proposition}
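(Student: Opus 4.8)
The plan is to reduce the statement to a counting problem over the $m = n/3$ blocks $\Omega_1, \dots, \Omega_m$. Since $\dim E_{\sigma}(C) = 4$, Lemma~\ref{lem:1} (with $r = 2$) furnishes a basis $\{v, v^{\sigma}, w, w^{\sigma}\}$ of $E_{\sigma}(C)$ with $w \notin \langle v, v^{\sigma}\rangle$. First I would pin down the $\langle \sigma \rangle$-orbit structure of the $15$ non-zero codewords: using $v^{\sigma^2} = v + v^{\sigma}$ and $w^{\sigma^2} = w + w^{\sigma}$ from Lemma~\ref{sum} (together with the fact that $x^{\sigma} \neq x$ for non-zero $x \in E_{\sigma}(C)$), every non-zero codeword lies in an orbit of size $3$, so there are exactly five orbits, with representatives $v$, $w$, $v+w$, $v + w^{\sigma}$, and $v + w^{\sigma^2}$. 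The distinct $\sigma$-weight hypothesis then says precisely that the five weights $\wt(v), \wt(w), \wt(v+w), \wt(v+w^{\sigma}), \wt(v+w^{\sigma^2})$ are pairwise distinct.

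Next I would classify the $m$ blocks according to the pair $(v|_{\Omega_{i}}, w|_{\Omega_{i}})$. Recall every non-zero block of a vector in $E_{\sigma}(C)$ has weight $2$, and the three cyclic shifts $w|_{\Omega_i}, w^{\sigma}|_{\Omega_i}, w^{\sigma^2}|_{\Omega_i}$ of a weight-$2$ block exhaust the three weight-$2$ patterns $\{110, 101, 011\}$. I introduce: $p$ = number of blocks where both $v$ and $w$ vanish; $q$ = number where only $w$ vanishes; $s$ = number where only $v$ vanishes; and $t$ = number where neither vanishes. For a block of the last kind exactly one of $w|_{\Omega_i}, w^{\sigma}|_{\Omega_i}, w^{\sigma^2}|_{\Omega_i}$ equals $v|_{\Omega_i}$, so I refine by setting $t_j$ = number of such blocks where $v|_{\Omega_{i}} = w^{\sigma^j}|_{\Omega_{i}}$, giving $t_0 + t_1 + t_2 = t$.

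A direct block-by-block computation then yields, writing $B := q + s + t$,
\[
\wt(v) = 2(B - s), \quad \wt(w) = 2(B - q), \quad \wt(v + w^{\sigma^j}) = 2(B - t_j) \ \ (j = 0,1,2).
\]
The decisive observation — and the step I expect to be the crux — is that every one of the five weights has the form $2(B - \delta)$, so that distinctness of the weights is equivalent to distinctness of the five ``deficits'' $\delta \in \{s, q, t_0, t_1, t_2\}$. These are five non-negative integers, so being pairwise distinct forces their sum to be at least $0 + 1 + 2 + 3 + 4 = 10$. Since $s + q + t_0 + t_1 + t_2 = q + s + t = B$, we obtain $B \geq 10$, whence $m = p + B \geq 10$ and therefore $n = 3m \geq 30$.

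The genuine obstacle is not the final arithmetic but the bookkeeping in the middle step: one must check carefully that a block of the ``$t$''-type contributes weight $0$ to exactly one of the three sums $v + w^{\sigma^j}$ and weight $2$ to the other two, which is exactly where the identity $\{w|_{\Omega_i}, w^{\sigma}|_{\Omega_i}, w^{\sigma^2}|_{\Omega_i}\} = \{110, 101, 011\}$ enters. Once the weight formulas are established in this clean form, the reformulation in terms of distinct deficits makes the bound transparent; the configuration $\{s, q, t_0, t_1, t_2\} = \{0,1,2,3,4\}$ with $p = 0$ shows that $n = 30$ is attained, so the bound is sharp.
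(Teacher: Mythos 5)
Your proof is correct and is essentially the paper's own argument: the paper likewise takes the basis $\{v,w,v^{\sigma},w^{\sigma}\}$ from Lemma~\ref{lem:1}, classifies blocks by the patterns of $v$ and $w$ (its parameters $l_0,l_1,l_2,s,t,r$ correspond to your $t_0,t_2,t_1,s,q,p$), computes the five orbit weights, and concludes that five pairwise distinct non-negative integers must sum to at least $10$, giving $n\geq 30$. Your ``deficit'' reformulation $\wt = 2(B-\delta)$ and the explicit verification that there are exactly five non-zero $\langle\sigma\rangle$-orbits are just cleaner packaging of the same computation.
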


\begin{proof}
Since $4={\rm dim}(E_{\sigma}(C))$, Lemma ~\ref{lem:1} guarantees the existence of two linearly independent vectors $v$ and $w$ in $E_{\sigma}(C)$ such
that $\beta=\{v,w,v^{\sigma},w^{\sigma}\}$ is a basis for $E_{\sigma}(C)$. Let $l$ denote the number of $1$'s which overlap in
$v$ and $w$. Then it is not hard to observe that $$wt(v+w)=wt(v)+wt(w)-2l.$$
For $i\in\{0,1,2\}$, let
$$l_{i}:=|\{j\in {\rm J}_{v^{\sigma^{i}}}\cap {\rm J}_{w} :   v^{\sigma^{i}}|_{\Omega_{j}}=w|_{\Omega_{j}}\}|$$ denote the number of non-zero $3$-blocks
such that $v^{\sigma^{i}}$ and $w$ coincide. Let $s:=|{\rm J}_{w}\setminus {\rm J}_{v}|$ denote the number of non-zero $3$-blocks of $w$ such that the
associated $3$-block of $v$ is zero. In a similar way, let $t:=|{\rm J}_{v}\setminus {\rm J}_{w}|$. Let $r:=|\{j\notin {\rm J}_{v}\cup {\rm J}_{w}\}|$
be the number of common zero $3$-blocks of $v$ and $w$. One can observe that $$n=3(l_{0}+l_{1}+l_{2}+s+r+t)$$ $$wt(v)=2(l_{0}+l_{1}+l_{2})+2t$$
$$wt(w)=2(l_{0}+l_{1}+l_{2})+2s$$ $$wt(v+w)=2(l_{1}+l_{2}+t+s)$$ $$wt(v^{\sigma}+w)=2(l_{0}+l_{2}+t+s)$$ $$wt(v^{\sigma^{2}}+w)=2(l_{0}+l_{1}+t+s).$$
Moreover, as these vectors belong to distinct orbits and due to the hypothesis that $E_{\sigma}(C)$ is a distinct $\sigma$-weight code, we deduce that
$\{l_{0},l_{1},l_{2},s,t\}$ is a set which contains five distinct non-negative integers. Hence the minimal choice would be $\{0,1,2,3,4\}$ which forces
$n$ be at least $30$.
\end{proof}

The following example shows that the lower bound given in Proposition ~\ref{prop:1} for the length of such codes is attained.

\begin{Example}
Let $C$ be a binary linear code spanned by $$v_{1}= 110110110110110110110110110110,$$ $$v_{2}=110101101011011011000000000000,$$
$$v_{3}=011011011011011011011011011011,$$  $$v_{4}=011110110101101101000000000000.$$ Using ${\rm MAGMA}$, we calculate that $|{\rm PAut}(C)|=2^{15}\cdot
3^{7}$. In particular, $$\sigma=\prod_{i=0}^{9}(3i+1,3i+2,3i+3)$$ is in ${\rm PAut}(C)$ and $C=E_{\sigma}(C)$ is a $4$-dimensional binary linear code
with length $30$. Moreover, we observe that $C$ is a distinct $\sigma$-weight code.
\end{Example}

Using the notations mentioned in Proposition ~\ref{prop:1}, we state the following result.
\vspace{0.3cm}

\begin{Corollary}~\label{cor:4}
Let $C$ be a binary linear code where ${\rm PAut}(C)$ contains $\sigma$ which is a fixed point free permutation of order $3$. If $E_{\sigma}(C)$ is a
$4$-dimensional distinct $\sigma$-weight code, then $l_{i}\geq 2$ for some $i\in\{0,1,2\}$.
\end{Corollary}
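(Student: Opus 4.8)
The plan is to read off the conclusion directly from the bookkeeping already set up in Proposition~\ref{prop:1}. Under the stated hypotheses, Lemma~\ref{lem:1} supplies a basis $\beta=\{v,w,v^{\sigma},w^{\sigma}\}$ of $E_{\sigma}(C)$, and the five parameters $l_{0},l_{1},l_{2},s,t$ attached to this basis satisfy the weight identities recorded in the proof of Proposition~\ref{prop:1}. The key fact I would import is precisely the conclusion reached there: since $E_{\sigma}(C)$ is a distinct $\sigma$-weight code and the codewords $v,w,v+w,v^{\sigma}+w,v^{\sigma^{2}}+w$ lie in pairwise distinct $\langle\sigma\rangle$-orbits, the five integers $l_{0},l_{1},l_{2},s,t$ are pairwise distinct.

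From here the argument is a short counting step. First I would isolate the three quantities $l_{0},l_{1},l_{2}$, which by the previous paragraph are pairwise distinct non-negative integers. There are exactly two non-negative integers strictly smaller than $2$, namely $0$ and $1$. Hence the set $\{l_{0},l_{1},l_{2}\}$, having three elements, cannot be contained in $\{0,1\}$, so at least one of its members must be at least $2$. This is exactly the assertion $l_{i}\geq 2$ for some $i\in\{0,1,2\}$.

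If one prefers not to invoke the distinctness of the full tuple, the distinctness of $l_{0},l_{1},l_{2}$ alone can be recovered on the spot from the three identities $wt(v+w)=2(l_{1}+l_{2}+t+s)$, $wt(v^{\sigma}+w)=2(l_{0}+l_{2}+t+s)$, and $wt(v^{\sigma^{2}}+w)=2(l_{0}+l_{1}+t+s)$: subtracting any two of these shows, for instance, that $wt(v+w)\neq wt(v^{\sigma}+w)$ is equivalent to $l_{0}\neq l_{1}$, and similarly for the other pairs, so the distinct $\sigma$-weight hypothesis yields that $l_{0},l_{1},l_{2}$ are pairwise distinct. There is no genuine obstacle in this corollary: all the substantive work, namely constructing the basis and translating the distinct-weight condition into distinctness of the parameters, was already carried out in Lemma~\ref{lem:1} and Proposition~\ref{prop:1}, and what remains is only the pigeonhole observation that three distinct non-negative integers cannot all lie in $\{0,1\}$.
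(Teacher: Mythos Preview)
Your proposal is correct and matches the paper's own proof, which simply records that the corollary ``follows from the proof of Proposition~\ref{prop:1}.'' You have merely made explicit the pigeonhole step that three pairwise distinct non-negative integers cannot all lie in $\{0,1\}$, which is exactly the intended content.
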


\begin{proof}
This follows from the proof of Proposition ~\ref{prop:1}

\end{proof}

\begin{Theorem}~\label{thm:3}
Let $\sigma$ be a fixed point free permutation of order $3$ in $S_n$. There is no binary $4$-dimensional non-distinct $\sigma$-weight code $C$ of length
$n$ where ${\rm PAut}(C)$ is generated by $\sigma$.

\end{Theorem}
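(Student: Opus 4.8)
The plan is to argue by contradiction: assume such a code $C$ exists and produce a permutation automorphism $\theta\in\PAut(C)$ lying outside $\langle\sigma\rangle$, contradicting $\PAut(C)=\langle\sigma\rangle$. First I would fix the setting. Since $\PAut(C)=\langle\sigma\rangle$, Corollary~\ref{cor:3} gives $\dim(E_{\sigma}(C))\geq 4$; as $\dim(E_{\sigma}(C))\leq\dim(C)=4$, we get $\dim(E_{\sigma}(C))=4$ and $C=E_{\sigma}(C)$. Lemma~\ref{lem:1} then supplies a basis $\{v,w,v^{\sigma},w^{\sigma}\}$, and the five nonzero $\langle\sigma\rangle$-orbits of $C$, together with their weights, are exactly the ones tabulated in the proof of Proposition~\ref{prop:1}. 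Comparing those weight formulas, one checks that the five orbit weights are pairwise distinct precisely when the integers $l_0,l_1,l_2,s,t$ are pairwise distinct; hence the non-distinct hypothesis is equivalent to the assertion that two of these five integers coincide.

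The guiding viewpoint is that, as in the cubic construction of Ling--Sol\'e, $E_{\sigma}(C)$ is naturally a $2$-dimensional $\FF_4$-vector space on which $\sigma$ acts as multiplication by a primitive cube root of unity $\omega$, and the five counts $l_0,l_1,l_2,s,t$ record how many of the $m$ blocks realize each of the five projective column types in $\mathbb{P}^1(\FF_4)$ (namely $\bar w_j=\omega^{i}\bar v_j$ for $i\in\{0,1,2\}$, $\bar v_j=0$, and $\bar w_j=0$). Re-choosing the pair $\{v,w\}$ amounts to an $\FF_4$-linear change of basis, which permutes the five counts through the natural action of $\mathrm{PGL}_2(\FF_4)$ on the five points of $\mathbb{P}^1(\FF_4)$. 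Since this action is sharply $3$-transitive, it is transitive on unordered pairs, so after such a relabeling --- which does not affect $\PAut(C)$ --- I may assume that the coinciding pair is $\{l_1,l_2\}$, i.e. $l_1=l_2$.

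With $l_1=l_2$ in force, I would build $\theta$ in two stages, generalizing the involution constructed in Theorem~\ref{thm:2}. Let $\theta_0:=\prod_{j=1}^{m}(3j-1,3j)$ be the coordinate permutation transposing the last two entries of every $3$-block. A block-by-block check shows that $\theta_0$ conjugates $\sigma$ to $\sigma^{-1}$ and acts on each block as the local Frobenius map, sending a block of type $L_i$ to one of type $L_{-i}$ while fixing the types $S$, $T$, $L_0$ and the zero blocks; thus $\theta_0$ interchanges the $L_1$- and $L_2$-blocks and preserves every other type. Because $l_1=l_2$, the image $\theta_0(C)$ has the same multiset of projective column types as $C$, so $\theta_0(C)$ and $C$ are monomially equivalent over $\FF_4$: there is a coordinate permutation $M$ --- built from local $3$-cycles $\sigma_j$ (which realize the scalars in $\FF_4^{\times}$) together with a permutation of the $m$ blocks matching columns of equal type --- with $M(\theta_0(C))=C$. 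Setting $\theta:=M\circ\theta_0$ yields $\theta\in\PAut(C)$.

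It then remains to see that $\theta\notin\langle\sigma\rangle$: the factor $\theta_0$ contributes a nontrivial Frobenius twist, so $\theta$ acts on $E_{\sigma}(C)$ as an $\FF_2$-semilinear but not $\FF_4$-linear map, whereas every element of $\langle\sigma\rangle$ acts as an $\FF_4$-scalar; in particular $\theta$ is neither the identity nor a power of $\sigma$. This contradicts $\PAut(C)=\langle\sigma\rangle$ and finishes the proof. I expect the main obstacle to be the construction and verification in the third paragraph: one must confirm that equality of the two counts is genuinely enough to realize the swap of column types by an honest coordinate permutation preserving $C$ (and not merely its weight enumerator), which is why the projective column-multiset bookkeeping --- equivalently, an explicit case-by-case matching of the $L_1$- and $L_2$-blocks together with the correct local $3$-cycles --- is the crux. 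A secondary point needing care is that the reduction to $l_1=l_2$ be carried out by an $\FF_4$-linear relabeling alone, so that the even permutations furnished by $\mathrm{PGL}_2(\FF_4)$ suffice and no coordinate permutation (which might fail to preserve $C$) is required for the reduction itself.
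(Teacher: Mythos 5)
Your proof is correct, but it reaches the conclusion by a genuinely different route from the paper's. The paper works entirely at the binary level: from the non-distinct hypothesis it takes representatives $v,w$ of two equal-weight orbits (equivalently, it normalizes the coinciding pair of counts to $\{s,t\}$ — the ``dual'' choice to your normalization $l_1=l_2$, obtained just by re-choosing orbit representatives rather than by invoking the $\mathrm{PGL}_2(\FF_4)$-action), and then builds an explicit involution $\alpha$ block by block: on blocks where $v$ and $w$ are both nonzero, $\alpha_i$ swaps the positions of their zeros; on common zero blocks it is an arbitrary transposition; and when $\wt(v)=\wt(w)<2m$ the blocks of ${\rm J}_v\setminus {\rm J}_w$ are paired with those of ${\rm J}_w\setminus {\rm J}_v$ by cross-block involutions $\hat\alpha_j$. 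This $\alpha$ satisfies $v^\alpha=w$ and $\alpha^{-1}\sigma\alpha=\sigma^{-1}$, hence lies in $\PAut(C)\setminus\langle\sigma\rangle$. You replace all of this with the $\FF_4$-module viewpoint: the $\mathrm{PGL}_2(\FF_4)$ reduction to $l_1=l_2$, then the semilinear map $\theta=M\circ\theta_0$ (coordinatewise Frobenius followed by a monomial matching), excluded from $\langle\sigma\rangle$ by semilinearity (or, just as well, because $\theta^{-1}\sigma\theta=\sigma^{-1}$). Amusingly, the two automorphisms are different maps: yours fixes $v$ and $w$ and swaps $v^\sigma\leftrightarrow v^{\sigma^2}$ and $w^\sigma\leftrightarrow w^{\sigma^2}$, while the paper's swaps $v\leftrightarrow w$; both invert $\sigma$. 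What your approach buys: uniformity (no case split between $\wt(v)=\wt(w)=2m$ and $\wt(v)=\wt(w)<2m$, since zero columns are just one more column type) and a framework — $\mathrm{P\Gamma L}$-type symmetries of the associated $\FF_4$-code — that visibly scales to $\dim(E_\sigma(C))=2k$. What the paper's buys: it is elementary and self-contained (no $\FF_4$-identification, no count/weight duality, no lifting of monomial maps to coordinate permutations), and its output is an honest involution, which is what makes the corollary following Theorem~\ref{thm:3} (that $S_3\le\PAut(C)$) and the reuse of its Cases 1 and 2 inside Propositions~\ref{pro:313} and~\ref{pro:314} immediate — your $\theta$ need not have order $2$ as a permutation. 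The two details you flagged as delicate do work out exactly as you anticipate: the matching step succeeds because ``same type'' means the same point of $\mathbb{P}^1(\FF_4)$ computed with respect to the same ordered rows $(\bar v,\bar w)$ in both generator matrices, so matched columns are genuinely proportional and the monomial map carries $\theta_0(C)$ onto $C$; and the $\mathrm{PGL}_2(\FF_4)$ reduction only re-chooses a basis of the fixed code $C$, so $\PAut(C)$ is literally unchanged.
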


\begin{proof}

Suppose that $C$ is a binary $4$-dimensional non-distinct $\sigma$-weight code where $\PAut(C)$ is generated by $\sigma$. First of all, Corollary
~\ref{cor:3} implies that $C=E_{\sigma}(C)$. Since $4={\rm dim}(E_{\sigma}(C))={\rm dim}(C)$, Lemma ~\ref{lem:1} guarantees the existence of two
linearly independent vectors $v$ and $w$ in $E_{\sigma}(C)$ such that $\beta=\{v,w,v^{\sigma},w^{\sigma}\}$ is a basis for $E_{\sigma}(C)$. Since $C$ is
a non-distinct $\sigma$-weight code, without loss of generality, we may assume that $wt(v)=wt(w)$. We have the following cases.

\textbf{Case 1.} Assume that $wt(v)=wt(w)=2m$, where $n=3m$ is the length of $C$. This implies that neither $v$ nor $w$ has a zero $3$-block.
For $i \in \{1,...,m\}$, if $v|_{\Omega_{i}} \neq w|_{\Omega_{i}}$, let $\alpha_{i}$ be the transposition with
$(v|_{\Omega_{i}})^{\alpha_{i}}=w|_{\Omega_{i}}$ which is the transposition that interchanges the positions of the unique zero in $v|_{\Omega_{i}}$ and
$w|_{\Omega_{i}}$. If $v|_{\Omega_{i}}=w|_{\Omega_{i}}$, let $\alpha_{i}$ be an arbitrary transposition associated to the elements in that block. Let
$\alpha:=\prod_{i=1}^{m}\alpha_{i}$. Then it is easy to see that $v^{\alpha}=w$.
 Moreover, since $\sigma=\prod_{i=1}^m \sigma_i$ and each $i$ satisfies $\alpha_i^{-1} \sigma_i \alpha_i=\sigma_i^{-1}$
we have that $\alpha^{-1} \sigma\alpha=\sigma^{-1}$, so we have that
$$(v^{\sigma})^{\alpha}=w^{\sigma^{2}}\text{ and } (v^{\sigma^{2}})^{\alpha}=w^{\sigma}.$$ Therefore, $\alpha\in {\rm PAut}(C)$.

\textbf{Case 2.} Assume that $wt(v)=wt(w)=2t<2m$. From Section 2, we know that $|{\rm J}_{v}\setminus {\rm J}_{w}|=|{\rm J}_{w}\setminus {\rm J}_{v}|$.
We will construct a permutation automorphism of $C$ different from $\sigma$
via the following steps.

\begin{itemize}
\item[(1)] If $i\in {\rm J}_{v}\cap {\rm J}_{w}$, then $v|_{\Omega_{i}}$ and $w|_{\Omega_{i}}$ are both non-zero $3$-blocks. In this case, we can
    construct an involution $\alpha_{i}$ as we did in  Case 1.

\item[(2)] If $i\notin {\rm J}_{v}\cup {\rm J}_{w}$, then $v|_{\Omega_{i}}$ and $w|_{\Omega_{i}}$ are both zero $3$-blocks. In this case, let
    $\alpha_{i}$ to be an arbitrary transposition associated to the elements in block $i$.

\item[(3)] If $i\in {\rm J}_{v}\setminus {\rm J}_{w}\neq \emptyset$. Recall that $|{\rm J}_{v}\setminus {\rm J}_{w}|=|{\rm J}_{w}\setminus {\rm
    J}_{v}|$. Letting $d=|{\rm J}_{v}\setminus {\rm J}_{w}|$, then $d\geq 1$.

Let $${\rm J}_{v}\setminus {\rm J}_{w}:=\{r_{w,1},...,r_{w,d}\} \text{ and } {\rm J}_{w}\setminus {\rm J}_{v}:=\{r_{v,1},...,r_{v,d}\}.$$

We construct an involution which in particular, interchanges the  $v|_{\Omega_{r_{w,j}}}$ and $w|_{\Omega_{r_{v,j}}}$ for each $j\in \{1, \ldots,
d\}$. Namely,
let $\hat{\alpha}_{j}$ send the position of zero in the block $r_{w, j}$ of $v$  to the position of zero in the block $r_{v, j}$ of $w$, and adjust
the other
positions of these blocks so that
$\hat{\alpha}_{j}$ satisfies $$\hat{\alpha}_{j}^{-1} \sigma_{r_{w,j}}  \sigma_{r_{v,j}} \hat{\alpha}_{j} =\sigma_{r_{v,j}}^{-1} \sigma_{r_{w,j}}^{-1}
.$$

\end{itemize}

Let us define $$\alpha=\prod_{i\in {\rm J}_{v}\cap {\rm J}_{w}}\alpha_{i}\prod_{i\notin {\rm J}_{v}\cup {\rm
J}_{w}}\alpha_{i}\prod_{j\in\{1,...,d\}}\hat{\alpha}_{j}.$$
It is easy to see that $v^{\alpha}=w$. Moreover, $\alpha$ satisfies
$\alpha^{-1}\sigma\alpha=\sigma^{-1}$. As a result, we have the following equalities: $$(v^{\sigma})^{\alpha}=(v^{\alpha})^{\sigma^{2}}=w^{\sigma^{2}}$$
 $$(w^{\sigma})^{\alpha}=(w^{\alpha})^{\sigma^{2}}=v^{\sigma^{2}}$$
Hence we conclude that $\alpha\in {\rm PAut}(E_{\sigma}(C))={\rm PAut}(C)$, a contradiction.
\end{proof}

\begin{Example} Let $C$ be the code defined in Example \ref{example:1}. Recall that,
$$\sigma=(1,2,3)(4,5,6)(7,8,9)(10,11,12)(13,14,15)(16,17,18)$$ and
 $C$ is a binary linear code spanned by
$$v=110 000 101 011 000 000,$$ $$w=000 101 110 000 000 101,$$ $$v^{\sigma}=011 000 110 101 000 000,$$ $$w^{\sigma}=000 110 011 000 000 110.$$

 We have that $J_{v}\cap J_{w}=\{3\}, \ (J_{v}\cup J_{w})'=\{5\}, \ J_{v}\backslash J_{w}=\{1, 4\}, \ J_{w}\backslash J_{v}=\{2,6\}$. Moreoever,
 $$r_{w,1}=1, \ r_{w,2}=4 \text{ and } r_{v,1}=2, \ r_{v,2}=6,$$
which gives us that $\alpha_3=(8, 9), \alpha_5=(13, 14)$, $\hat{\alpha}_1=(1, 4)(2, 6)(3, 5)$, $\hat{\alpha}_2=(11, 16)(12, 18)(10, 17)$ and
it follows that $$\alpha=(1,4)(2,6)(3,5)(8,9)(10,17)(11,16)(12,18)(13,14).$$
We observe that $v^{\alpha}=w$ and $$\alpha^{-1} \sigma \alpha= (4,6,5)(1,3,2)(7,9,8)(17,16,18)(14,13,15)(11,10,12)=\sigma^{-1}.$$

 \end{Example}

\begin{Corollary}
Let $\sigma$ be a fixed point free permutation of order $3$. If $C$ is a $4$-dimensional non-distinct $\sigma$-weight code, then $S_3 \leq \PAut(C)$.

\end{Corollary}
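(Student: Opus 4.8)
The plan is to show that $\PAut(C)$ contains a copy of $S_3$ by exhibiting two elements of orders $3$ and $2$ that generate it. The first generator is already available: since $C$ is a cubic code, $\sigma \in \PAut(C)$ is a fixed point free permutation of order $3$. The second generator is the involution $\alpha$ produced in the proof of Theorem \ref{thm:3}. Recall that in that proof, under the assumption that $C$ is a non-distinct $\sigma$-weight code, we constructed a permutation $\alpha \in \PAut(C)$ satisfying $v^{\alpha} = w$ together with the crucial conjugation relation $\alpha^{-1}\sigma\alpha = \sigma^{-1}$. I would invoke exactly this $\alpha$.

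The key observation is that the conjugation relation $\alpha^{-1}\sigma\alpha = \sigma^{-1}$ is precisely the defining relation of the symmetric group $S_3 \cong D_3$, namely the relation that makes the generator of a $C_2$ invert the generator of a $C_3$. Concretely, I would argue as follows. Since $\alpha$ is an involution (it is built as a product of transpositions and the building blocks $\alpha_i$, $\hat{\alpha}_j$ are involutions by construction), we have $\alpha^2 = \id$. Together with $\sigma^3 = \id$ and $\alpha^{-1}\sigma\alpha = \sigma^{-1}$, the subgroup $\langle \sigma, \alpha \rangle$ of $\PAut(C)$ is a homomorphic image of the group presented by $\langle s, a \mid s^3 = a^2 = 1, \ a^{-1}sa = s^{-1}\rangle$, which is $S_3$. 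To conclude that this image is all of $S_3$ rather than a proper quotient, I would check that $\alpha \notin \langle \sigma \rangle$: indeed $\langle \sigma \rangle$ consists of $\id, \sigma, \sigma^2$, none of which is an involution (as $\sigma$ has order $3$), so the order-$2$ element $\alpha$ lies outside $\langle \sigma \rangle$. Hence $\langle \sigma, \alpha\rangle$ properly contains the order-$3$ subgroup $\langle \sigma \rangle$, and a group generated by a $3$-cycle relation and an inverting involution that is not itself in the cyclic part must have order $6$; therefore $\langle \sigma, \alpha \rangle \cong S_3 \leq \PAut(C)$.

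I would present this as a short consequence rather than a fresh argument, since all the technical content already resides in Theorem \ref{thm:3}. The one point that warrants an explicit sentence is why $\alpha$ is genuinely an involution: in Case 1 of Theorem \ref{thm:3}, $\alpha$ is a product of transpositions (one per block), and in Case 2 each $\alpha_i$ is a transposition while each $\hat{\alpha}_j$ is a product of three disjoint transpositions interchanging the two relevant blocks, so $\alpha$ is a product of disjoint transpositions and thus satisfies $\alpha^2 = \id$. I do not anticipate a serious obstacle here; the main subtlety—and the only place a reader might pause—is confirming that the abstract relations really do pin down the group as $S_3$ and not a smaller quotient, which is handled by the remark that $\alpha$ has even order while every nontrivial element of $\langle \sigma\rangle$ has order $3$.

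\begin{proof}
Since $C$ is a cubic code, $\sigma \in \PAut(C)$ is a fixed point free permutation of order $3$, so $\langle \sigma \rangle \cong C_3 \leq \PAut(C)$. By Theorem \ref{thm:3}, the non-distinct $\sigma$-weight hypothesis provides a permutation $\alpha \in \PAut(C)$ with $\alpha^{-1}\sigma\alpha = \sigma^{-1}$. By its construction in both cases of that proof, $\alpha$ is a product of disjoint transpositions, hence an involution: $\alpha^2 = \id$. The relations $\sigma^3 = \alpha^2 = \id$ and $\alpha^{-1}\sigma\alpha = \sigma^{-1}$ show that $\langle \sigma, \alpha\rangle$ is a quotient of $S_3$. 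Moreover $\alpha \notin \langle \sigma \rangle = \{\id, \sigma, \sigma^2\}$, since $\alpha$ has order $2$ whereas every non-identity element of $\langle \sigma\rangle$ has order $3$. Thus $\langle \sigma, \alpha\rangle$ strictly contains $\langle \sigma\rangle$, forcing $|\langle \sigma, \alpha\rangle| = 6$ and hence $\langle \sigma, \alpha\rangle \cong S_3 \leq \PAut(C)$.
\end{proof}
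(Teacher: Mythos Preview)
Your proof is correct and follows the same approach as the paper, which simply states that the result follows from the proof of Theorem~\ref{thm:3}. You have spelled out explicitly why the involution $\alpha$ constructed there, together with $\sigma$, generates a copy of $S_3$ via the relation $\alpha^{-1}\sigma\alpha=\sigma^{-1}$; this is precisely the content the paper's one-line proof leaves implicit.
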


\begin{proof}
Follows from the proof of Theorem \ref{thm:3}.
\end{proof}

\begin{Proposition}~\label{prop:2}
Let $C$ be a binary linear code and $\sigma$  a fixed point free permutation of order $3$ in ${\rm PAut}(C)$. If
$C$ is a $4$-dimensional distinct $\sigma$-weight code and $C=E_\sigma(C)$, then $\PAut(C)$ contains an involution.
\end{Proposition}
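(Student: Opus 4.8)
The plan is to construct a non-trivial involution $\theta\in S_n$ that fixes $C=E_\sigma(C)$ \emph{pointwise}; any such $\theta$ lies in $\PAut(C)$ and is manifestly not the identity, so it is the required involution. First I would set up the basis and notation: since $C=E_\sigma(C)$ is $4$-dimensional, Lemma~\ref{lem:1} provides a basis $\{v,w,v^\sigma,w^\sigma\}$, and I import the quantities $l_0,l_1,l_2,s,t$ from the proof of Proposition~\ref{prop:1}. The distinct $\sigma$-weight hypothesis makes these five non-negative integers pairwise distinct, so at least one of $l_0,l_1,l_2$ is $\geq 2$; this is exactly Corollary~\ref{cor:4}. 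Replacing $v$ by $v^{\sigma^i}$ for the appropriate $i$ (harmless by Lemma~\ref{lem:1}, since $\{v^{\sigma^i},(v^{\sigma^i})^\sigma\}$ still spans $\langle v,v^\sigma\rangle$ and the $\langle\sigma\rangle$-orbit structure is unchanged), I may assume $l_0\geq 2$. Thus there exist two distinct indices $p\neq q$ with $v|_{\Omega_p}=w|_{\Omega_p}\neq 000$ and $v|_{\Omega_q}=w|_{\Omega_q}\neq 000$.

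Next I would exploit the block structure: each non-zero $3$-block of a vector in $E_\sigma(C)$ has a single zero, and $\sigma$ cyclically permutes the three non-zero $3$-blocks $110,011,101$. Hence among the (three) bijections $\phi\colon\Omega_q\to\Omega_p$ that commute with the cyclic shift (i.e.\ $\phi\,\sigma_q=\sigma_p\,\phi$) there is a unique one sending the zero-position of $v|_{\Omega_q}$ to the zero-position of $v|_{\Omega_p}$; this $\phi$ carries $v|_{\Omega_q}$ onto $v|_{\Omega_p}$, and since $w$ equals $v$ on both blocks it carries $w|_{\Omega_q}$ onto $w|_{\Omega_p}$ as well. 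I then define $\theta$ to be the permutation that interchanges the two blocks through $\phi$, namely $\theta=\prod_{x\in\Omega_q}(x,\phi(x))$, a product of three disjoint transpositions, and fixes every position outside $\Omega_p\cup\Omega_q$; in particular $\theta$ is an involution with $\theta\neq\id$.

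The verification then reduces to checking that $\theta$ fixes the whole basis. Since $\theta$ transports $v|_{\Omega_q}$ to $v|_{\Omega_p}$ and $v|_{\Omega_p}$ back to $v|_{\Omega_q}$, and is trivial elsewhere, we get $v^\theta=v$, and the identical argument (using $w|_{\Omega_p}=v|_{\Omega_p}$, $w|_{\Omega_q}=v|_{\Omega_q}$) gives $w^\theta=w$. For the $\sigma$-images I would use that $\phi$ commutes with the shift: on $\Omega_p$ the block of $(v^\sigma)^\theta$ is $\phi\big((v|_{\Omega_q})^\sigma\big)=\big(\phi(v|_{\Omega_q})\big)^\sigma=(v|_{\Omega_p})^\sigma=v^\sigma|_{\Omega_p}$, and symmetrically on $\Omega_q$, so $(v^\sigma)^\theta=v^\sigma$ and likewise $(w^\sigma)^\theta=w^\sigma$. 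Therefore $\theta$ fixes $v,w,v^\sigma,w^\sigma$, hence fixes $C=E_\sigma(C)$ pointwise, giving $C^\theta=C$ and $\theta\in\PAut(C)$.

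The delicate point, and the reason Corollary~\ref{cor:4} is precisely what is needed, is that one cannot simply apply a transposition \emph{inside} a single block: such a within-block transposition acts as the Frobenius map on that block while leaving the other blocks linear, which breaks $\sigma$-equivariance and throws $v^\sigma$ out of $C$. The fix is to use a \emph{shift-commuting} permutation that swaps two whole blocks, and such a permutation can fix both $v$ and $w$ only when the two blocks carry the same pair of $3$-blocks up to the common cyclic ratio $w|_\Omega/v|_\Omega$, i.e.\ two blocks of the same type $l_i$. Guaranteeing two such blocks is exactly the statement $l_i\geq 2$, so the only real work is the bookkeeping that $\phi$ simultaneously transports $v$, $w$, and their $\sigma$-images; the involutivity of $\theta$ and its membership in $\PAut(C)$ are then immediate.
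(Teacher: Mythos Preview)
Your proof is correct, but it proceeds differently from the paper's. The paper does a case split on the number of zero $3$-blocks of $v$: if $v$ has at least two zero blocks $\Omega_i,\Omega_j$, it picks positions $e\in\Omega_i$, $f\in\Omega_j$ where $w$ vanishes and shows the single transposition $(e,f)$ fixes every basis vector; if $v$ has only one zero block, it invokes Corollary~\ref{cor:4} to get $l_0\geq 2$, replaces $v$ by $v+w$ (which now has two zero blocks), and reduces to the first case. By contrast, you bypass the case analysis entirely: you go straight to $l_0\geq 2$, locate two blocks $\Omega_p,\Omega_q$ with $v|_{\Omega_p}=w|_{\Omega_p}$ and $v|_{\Omega_q}=w|_{\Omega_q}$, and build a shift-equivariant block swap $\theta$ (a product of three transpositions) that fixes $v,w,v^\sigma,w^\sigma$ simultaneously.

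Each approach has its advantage. Yours is more uniform and makes transparent why $\sigma$-equivariance is preserved (because $\phi\sigma_q=\sigma_p\phi$), at the cost of a slightly larger involution. The paper's argument yields the simplest possible witness, a single transposition, but needs the two-case structure and the substitution $v\mapsto v+w$ in the second case. Your final paragraph slightly overstates the obstruction: the paper does not use a within-block transposition either, but rather a transposition between positions in two \emph{different} blocks, which (because the values at the swapped positions agree for every basis vector) fixes $C$ pointwise without any need for $\sigma$-equivariance of the permutation itself.
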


\begin{proof}
As a result of Lemma ~\ref{lem:1} and due to the hypothesis, there exists a basis $\beta=\{v,w,v^{\sigma},w^{\sigma}\}$ for $E_{\sigma}(C)$ where
$wt(v)\neq wt(w)$. Without loss of generality, we may assume that $v$ has some zero $3$-blocks.

\textbf{ Case 1.} If $v$ has at least two zero $3$-blocks, then for some distinct $i$ and $j$, we have $v|_{\Omega_{i}}=v|_{\Omega_{j}}=000$. Now we
have the following possibilities.

\begin{itemize}

\item[(1)] $w|_{\Omega_{i}},w|_{\Omega_{j}}\in\{110,011,101\}$ or $w|_{\Omega_{i}}=w|_{\Omega_{j}}=000$. In any case, there exist some positions $e$
    and $f$ in the $i$th and the $j$th $3$-blocks, respectively, for which $w_{e}=w_{f}=0$. Set $\alpha=(e,f)$. Now, we can see that
    $$v^{\alpha}=v, \quad (v^{\sigma})^{\alpha}=v^{\sigma}, \quad (v^{\sigma^{2}})^{\alpha}=v^{\sigma^{2}}$$
     Moreover, the equalities $(w^{\sigma})_{e}=(w^{\sigma})_{f}=1$ and $(w^{\sigma^{2}})_{e}=(w^{\sigma^{2}})_{f}=1$ verify that $$w^{\alpha}=w,
     \quad (w^{\sigma})^{\alpha}=w^{\sigma}, \quad (w^{\sigma^{2}})^{\alpha}=w^{\sigma^{2}}$$
     Hence $\alpha\in \PAut(C)$.

\item[(2)] $w|_{\Omega_{i}}\in\{110,011,101\}$ and $w|_{\Omega_{j}}=000$. Focus on the block $\Omega_{j}$. Since, for any $e$ and $f$ in $\Omega_{j}$
    satisfy $w_{e}=w_{f}=0$ and $v_{e}=v_{f}=0$, we can observe that the transposition $\alpha:=(e,f)$ fixes every element of $C$. Hence $\alpha\in
    \PAut(C)$.
\end{itemize}

\textbf{ Case 2.} If $v$ has a unique zero $3$-block, then Corollary ~\ref{cor:4} verifies that at least one of the elements of $\{l_{0},l_{1},l_{2}\}$
must be greater than or equal to $2$. Without loss of generality, assume that $l_{0}\geq 2$. This implies that ${\rm J}_{v}\cap {\rm J}_{w}$ includes at
least two distinct elements, say $i$ and $j$, for which $v|_{\Omega_{i}}=w|_{\Omega_{i}}$ and $v|_{\Omega_{j}}=w|_{\Omega_{j}}$. As a result,
$(v+w)|_{\Omega_{i}}=(v+w)|_{\Omega_{j}}=000$. Now $v+w$ and $w$ satisfy the conditions in (1) of Case 1.
\end{proof}

\begin{Corollary}\label{dim4}
There exists no $4$-dimensional binary linear code $C$ such that
${\rm PAut}(C)=\langle\sigma\rangle$ where $\sigma$ is a fixed point free permutation of order three.
\end{Corollary}

\begin{proof}
This is a direct consequence of Corollary ~\ref{cor:3}, 
Theorem ~\ref{thm:3}, and Proposition ~\ref{prop:2}.

\end{proof}

\begin{Corollary}
Let $C$ be a  binary linear code and $\PAut(C)=\langle \sigma \rangle$ where $\sigma$ is a fixed point free permuation
of order $3$. If $C=E_{\sigma}(C)$, then $\dim(C) \geq 6$.
\end{Corollary}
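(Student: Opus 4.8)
The plan is to reduce everything to the parity constraint plus the already-established impossibility in dimension $4$. Since the hypothesis is $C=E_{\sigma}(C)$, we immediately get $\dim(C)=\dim(E_{\sigma}(C))$, so the two quantities can be used interchangeably throughout. The first observation I would record is that this dimension is even: the multiplicative order of $2$ modulo $3$ is $2$, so Lemma \ref{cor:1} forces $2$ to divide $\dim(E_{\sigma}(C))=\dim(C)$.

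Next I would pin down the lower end of the range. By Corollary \ref{cor:3} (which itself rests on Corollary \ref{cor:2} and Theorem \ref{thm:2}), any code with $\PAut(C)=\langle\sigma\rangle$ and $\sigma$ a fixed point free order-$3$ permutation satisfies $\dim(C)\geq 4$. Combining this with the parity observation, the only value I need to exclude in order to conclude $\dim(C)\geq 6$ is exactly $\dim(C)=4$.

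The exclusion of the value $4$ is precisely the content of Corollary \ref{dim4}: there is no $4$-dimensional binary linear code with $\PAut(C)=\langle\sigma\rangle$. So I would simply invoke it to rule out $\dim(C)=4$. (If one wished to unwind that reference rather than cite it, the argument splits on whether $C$ is a distinct or non-distinct $\sigma$-weight code: in the non-distinct case Theorem \ref{thm:3} says no such code exists, and in the distinct case Proposition \ref{prop:2}, applicable because $C=E_{\sigma}(C)$, produces an involution in $\PAut(C)$, contradicting $|\langle\sigma\rangle|=3$.) With $\dim(C)$ even, at least $4$, and different from $4$, the smallest admissible value is $6$, giving $\dim(C)\geq 6$.

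I do not expect a genuine obstacle here, since all the real work has already been carried out in Theorem \ref{thm:3} and Proposition \ref{prop:2}. The only point that deserves a line of care is making explicit that the hypothesis $C=E_{\sigma}(C)$ is what transfers the evenness of $\dim(E_{\sigma}(C))$ to $\dim(C)$ and what licenses the use of Proposition \ref{prop:2}; once that is stated, the conclusion is an immediate arithmetic consequence of the parity bound together with Corollary \ref{dim4}.
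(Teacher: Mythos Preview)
Your argument is correct and matches the paper's intent: the paper states this corollary without proof immediately after Corollary~\ref{dim4}, treating it as an immediate consequence of the parity constraint (Lemma~\ref{cor:1} or Corollary~\ref{cor:3}) together with Corollary~\ref{dim4}, which is precisely the chain you spell out.
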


\begin{Proposition}\label{pro:313} Let $\sigma$ be a fixed point free permutation of order $3$. There is no binary $5$-dimensional linear code $C$  such
that  $E_{\sigma}(C)$ is distinct  $\sigma$-weight code and  ${\rm PAut}(C)$ is generated by $\sigma$.
\end{Proposition}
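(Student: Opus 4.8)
The plan is to show that under these hypotheses $\PAut(C)$ must contain a transposition, contradicting $\PAut(C)=\langle\sigma\rangle$ (which has no involution). First I would pin down the global structure. By Lemma~\ref{cor:1} the dimension of $E_{\sigma}(C)$ is even, and by Corollary~\ref{cor:3} it is at least $4$; since $\dim(C)=5=\dim(E_{\sigma}(C))+\dim(F_{\sigma}(C))$, this forces $\dim(E_{\sigma}(C))=4$ and $\dim(F_{\sigma}(C))=1$. Let $z$ generate $F_{\sigma}(C)$; every $3$-block of $z$ is $000$ or $111$, so I record $A=\{i : z|_{\Omega_{i}}=111\}$. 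The crucial feature of $A$ is this: a transposition $\alpha=(e,f)$ with $e\in\Omega_{i}$ and $f\in\Omega_{j}$ fixes $z$ precisely when $i$ and $j$ lie on the same side of $A$ (both in $A$ or both outside), whereas any within-block transposition fixes $z$ automatically.

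Next I would set up the combinatorics exactly as in Proposition~\ref{prop:1}: by Lemma~\ref{lem:1} choose a basis $\{v,w,v^{\sigma},w^{\sigma}\}$ of $E_{\sigma}(C)$ and form the quantities $l_{0},l_{1},l_{2},s,t$. Because $E_{\sigma}(C)$ is a distinct $\sigma$-weight code, these five non-negative integers are pairwise distinct, so $\max\{l_{0},l_{1},l_{2},s,t\}\geq 4$. This inequality is the engine of the argument: it supplies a single vector of $E_{\sigma}(C)$ with a generous supply of zero $3$-blocks whose ``partner'' block is non-zero.

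The core construction is then a case distinction on which quantity is largest. If $s\geq 3$ (respectively $t\geq 3$), then $v$ (respectively $w$) has at least three zero $3$-blocks on each of which $w$ (respectively $v$) is non-zero. Otherwise $s,t\leq 2$, and distinctness forces $l_{i}\geq 3$ for some $i$; here I replace $v$ by $u:=v^{\sigma^{i}}+w$, which vanishes on the $l_{i}$ blocks where $v^{\sigma^{i}}$ and $w$ agree, noting that $\{u,u^{\sigma},w,w^{\sigma}\}$ is again a basis of $E_{\sigma}(C)$ since $u+w=v^{\sigma^{i}}$ gives $\langle u,u^{\sigma},w,w^{\sigma}\rangle=\langle v,v^{\sigma},w,w^{\sigma}\rangle$. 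In every case I obtain a vector $u$ together with a partner $p\in\{v,w\}$ such that $u$ has at least three zero $3$-blocks on each of which $p$ is non-zero. By the pigeonhole principle two such blocks, say $i$ and $j$, lie on the same side of $A$. Taking $e\in\Omega_{i}$, $f\in\Omega_{j}$ to be the positions of the unique zero of $p$ in these blocks and setting $\alpha=(e,f)$, the rotation behaviour of a non-zero $3$-block (its unique zero moves to a $1$ under $\sigma$ and $\sigma^{2}$) shows $\alpha$ fixes $p,p^{\sigma},p^{\sigma^{2}}$, while $u,u^{\sigma},u^{\sigma^{2}}$ vanish on blocks $i,j$ and are fixed as well; hence $\alpha$ fixes $E_{\sigma}(C)$ pointwise. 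The choice of $i,j$ on the same side of $A$ shows $\alpha$ fixes $z$, hence fixes $F_{\sigma}(C)$. Therefore $\alpha\in\PAut(C)$ is an involution, the desired contradiction.

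I expect the main obstacle to be precisely the interaction with the fixed subcode: the transpositions produced in Proposition~\ref{prop:2} fix $E_{\sigma}(C)$ but need not fix the generator $z$ of $F_{\sigma}(C)$, so a naive appeal to that proposition does not suffice. The device that overcomes this is the pigeonhole step, which upgrades the distinctness inequality $\max\{l_{0},l_{1},l_{2},s,t\}\geq 4$ into two zero-blocks lying on the same side of $A$; once the correct block is identified, verifying that $\alpha$ acts trivially on $u^{\sigma}$, $p^{\sigma}$ and the remaining $\sigma$-images is routine bookkeeping.
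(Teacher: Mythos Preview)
Your proposal is correct and follows essentially the same approach as the paper: both arguments pin down $\dim E_\sigma(C)=4$, $\dim F_\sigma(C)=1$, use the parameters $l_0,l_1,l_2,s,t$ from Proposition~\ref{prop:1}, observe that distinctness forces one of them to be at least $3$, and then apply pigeonhole on the two possible $z$-block types $\{000,111\}$ to locate two blocks on which a cross-block transposition $(e,f)$ fixes every basis vector. The only cosmetic difference is that in the case $l_i\ge 3$ the paper works directly with $v,w$ (using that $v$ and $w$ agree on three blocks, so their zero positions coincide there), whereas you first change basis to $u=v^{\sigma^i}+w$ to reduce uniformly to the ``three zero blocks with non-zero partner'' situation; the resulting transposition is the same.
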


\begin{proof} Suppose that $C$ is a $5$-dimensional linear code  such that ${\rm PAut}(C)=\langle \sigma \rangle$ and $E_{\sigma}(C)$ is a distinct
$\sigma$-weight code. Then by Proposition \ref{prop:1} we have that the length of $C$ is at least $30$ and by Corollary \ref{cor:3}, we have that ${\rm
dim}(E_{\sigma}(C))=4$. On the other hand, Lemma \ref{lem:1} implies that we have a basis $\{v, w, v^{\sigma}, w^{\sigma}\}$ for  $E_{\sigma}(C)$. Let
$z$ be the basis element in $F_{\sigma}(C)$.  We use the notations in the proof of Proposition \ref{prop:1}.
Recall that $s:=|{\rm J}_{w}\setminus {\rm J}_{v}|$, $t:=|{\rm J}_{v}\setminus {\rm J}_{w}|$ and $r:=|\{j\notin {\rm J}_{v}\cup {\rm J}_{w}\}|$ where
${\rm J}_{v}$ denotes the set of indices $i$ for which $v|_{\Omega_i}$ is a non-zero $3$-block. For $i\in\{0,1,2\}$, let
$l_{i}:=|\{j\in {\rm J}_{v^{\sigma^{i}}}\cap {\rm J}_{w} :   v^{\sigma^{i}}|_{\Omega_{j}}=w|_{\Omega_{j}}\}|$ denote the number of non-zero $3$-blocks
such that $v^{\sigma^{i}}$ and $w$ coincide. We have  $n=3(l_{0}+l_{1}+l_{2}+s+r+t)$.   As $E_{\sigma}(C)$ is a distinct  $\sigma$-weight code,
by Corollary \ref{cor:4}, we have $l_i\geq 2$ for some $i\in \{0,1,2\}$.  We have two cases:

{\bf{Case 1.}} If $l_i\geq 3$ for some $i\in \{0,1,2\}$,
 without loss of generality, we can assume that $l_0 \geq 3$.
Since $l_0\geq 3$, there exists $i,j,k$ such that
$v|_{\Omega_{i}}=w|_{\Omega_{i}}, v|_{\Omega_{j}}=w|_{\Omega_{j}},v|_{\Omega_{k}}=w|_{\Omega_{k}} $.  Consider the corresponding blocks
$z|_{\Omega_{i}},z|_{\Omega_{j}}, z|_{\Omega_{k}}$. As $z\in  F_{\sigma}(C) $, at least two of the blocks $z|_{\Omega_{i}},z|_{\Omega_{j}},
z|_{\Omega_{k}}$ are identical.  Suppose $z|_{\Omega_{i}}=z|_{\Omega_{j}}$.  Consider the generator matrix $G$ of $C$.

\[ G=
\begin{bmatrix}
    z       \\
    v \\
   w \\
    v^{\sigma} \\ w^{\sigma}
\end{bmatrix}
=
\begin{bmatrix}
    ... & ... & z|_{\Omega_{i}} & \dots  &z|_{\Omega_{j }}& ... \\
    ... & .... & v|_{\Omega_{i}} & \dots  & v|_{\Omega_{j}}&... \\
    ... & ... & w|_{\Omega_{i}} & \dots & w|_{\Omega_{j}}&... \\
... & .... & v^{\sigma}|_{\Omega_{i}} & \dots  & v^{\sigma}|_{\Omega_{j}}&... \\
    ... & ... & w^{\sigma}|_{\Omega_{i}} & \dots & w^{\sigma}|_{\Omega_{j}}&...

\end{bmatrix}
\label{(1)}
\]

As $z|_{\Omega_{i}}=z|_{\Omega_{j}}$ and
$v|_{\Omega_{i}}, v|_{\Omega_{j}}, w|_{\Omega_{i}}, w|_{\Omega_{j}}, v^{\sigma}|_{\Omega_{i}}, w^{\sigma}|_{\Omega_{i}} \in \{110, 011, 101\}$
the following submatrix of $G$

\[
\begin{bmatrix}
    z|_{\Omega_{i}}   &z|_{\Omega_{j }}&  \\
    v|_{\Omega_{i}}  & v|_{\Omega_{j}}& \\
    w|_{\Omega_{i}}  & w|_{\Omega_{j}}& \\
 v^{\sigma}|_{\Omega_{i}}   & v^{\sigma}|_{\Omega_{j}}& \\
     w^{\sigma}|_{\Omega_{i}}  & w^{\sigma}|_{\Omega_{j}}&
\end{bmatrix}
\]
 has identical columns in pairs. Indeed, let $e$ be the position of zero in $v|_{\Omega_{i}}$ that is $v_e=0$ and let $f$ be the position of zero in
 $v|_{\Omega_{j}}$, that is $v_f=0$. Then $w_e=0=w_f$ and $ v^{\sigma}_e= v^{\sigma}_f=w^{\sigma}_e=w^{\sigma}_f=1$. Let $\alpha=(e,f)$.  In this case
 we have $z^{\alpha}=z, v^{\alpha}=v, w^{\alpha}=w, (v^{\sigma})^{\alpha}=v^{\sigma}, (w^{\sigma})^{\alpha}=w^{\sigma}$ . Hence $\alpha \in \PAut(C)$.

{\bf{Case 2.}}
Assume  for all $i\in\{0,1,2\}$ $l_i\leq 2$, that is $\{l_0, l_1, l_2\}=\{0,1,2\}$. Since $E_\sigma(C)$ is a distinct $\sigma$-weight code, from the
proof of Proposition \ref{prop:1}, we get that $s$ or $t$ is greater than or equal to $3$.
 That means $v$ or $w$ has at least three zero blocks. Without loss of generality assume $w$ has three zero blocks, $w|_{\Omega_i}, w|_{\Omega_j},
 w|_{\Omega_k}$.
 Then a submatrix of the generator matrix $G$ is of the form
\[
\begin{bmatrix}
    z|_{\Omega_{i}}   &z|_{\Omega_{j }}& z|_{\Omega_{k }} \\
    v|_{\Omega_{i}}  & v|_{\Omega_{j}}& v|_{\Omega_{k }} \\
    000  & 000& 000 \\
 v^{\sigma}|_{\Omega_{i}}   & v^{\sigma}|_{\Omega_{j}}& v^{\sigma}|_{\Omega_{k }} \\
     000  & 000& 000
\end{bmatrix}.
\]

By definition of $s$ and $t$,  all blocks $v|_{\Omega_{i}} , v|_{\Omega_{j}}, v|_{\Omega_{k }} $ are non-zero. Moreover, since $z\in F_\sigma(C)$, at
least two of the blocks $z|_{\Omega_{i}}, z|_{\Omega_{j}}, z|_{\Omega_{k}}$ are identical, say $z|_{\Omega_{i}}= z|_{\Omega_{j }}$. In this case, we
have a
submatrix of $G$
\[
\begin{bmatrix}
    aaa   &aaa&  \\
    v|_{\Omega_{i}}  & v|_{\Omega_{j}}& \\
    000  & 000& \\
 v^{\sigma}|_{\Omega_{i}}   & v^{\sigma}|_{\Omega_{j}}& \\
     000  & 000&
\end{bmatrix}
\]
 where $a\in \{0,1\}$. As $v|_{\Omega_{i}},  v|_{\Omega_{j}}, v^{\sigma}|_{\Omega_{i}}, v^{\sigma}|_{\Omega_{j}} \in \{110, 011, 101\}$,
there are some columns of this submatrix which are identical in pairs. Indeed, let $e$ be the position of zero in the block $v|_{\Omega_{i}} $, that is
$v_e=0$ and let $f$ be the position of zero in $v|_{\Omega_{j}} $, that is $v_f=0$. Then $v^{\sigma}_e=1=v^{\sigma}_f$. Let $\alpha=(e,f)$. It is clear
that  $\alpha$ fixes all basis elements and hence $\alpha \in \PAut(C)$.

Since in all possible cases, we deduce that there is an involution in $\PAut(C)$, the proof is finished.
\end{proof}

\begin{Proposition}\label{pro:314}Let $\sigma$ be a fixed point free permutation of order $3$ in $S_n$. There is no binary $5$-dimensional linear code
$C$ of length $n\geq 30$ such that  $E_{\sigma}(C)$ is a non-distinct  $\sigma$-weight code and  ${\rm PAut}(C)$ is generated by $\sigma$.
\end{Proposition}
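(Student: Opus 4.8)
The plan is to argue by contradiction: assume $\PAut(C)=\langle\sigma\rangle$ and exhibit a permutation automorphism of order $2$, which cannot lie in $\langle\sigma\rangle$. First I fix the data. By Lemma~\ref{cor:1} the dimension of $E_{\sigma}(C)$ is even, and by Corollary~\ref{cor:3} it is at least $4$; since $\dim(C)=5$ this forces $\dim(E_{\sigma}(C))=4$ and $\dim(F_{\sigma}(C))=1$. Write $F_{\sigma}(C)=\langle z\rangle$ and, using Lemma~\ref{lem:1}, fix a basis $\{v,w,v^{\sigma},w^{\sigma}\}$ of $E_{\sigma}(C)$. I keep the counts $l_0,l_1,l_2,s,t,r$ from the proof of Proposition~\ref{prop:1}, so $m=l_0+l_1+l_2+s+t+r$ and $n=3m\geq 30$ gives $m\geq 10$; recall that $z$ is constant ($000$ or $111$) on each $3$-block.

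Next I record two constructions producing an order-$2$ element of $\PAut(C)$, each of which immediately yields the desired contradiction. First, if $v$ and $w$ have a common zero $3$-block (that is, $r\geq 1$), then on that block every basis vector $z,v,w,v^{\sigma},w^{\sigma}$ is constant, so a transposition inside the block fixes all of them; thus I may assume $r=0$ and $m=l_0+l_1+l_2+s+t\geq 10$. Second, suppose two distinct $3$-blocks $\Omega_i,\Omega_j$ of the same ``type'' --- both in $J_v\setminus J_w$, both in $J_w\setminus J_v$, or both in $J_v\cap J_w$ with the same index $a$ such that $w|_{\Omega}=v^{\sigma^a}|_{\Omega}$ --- carry the same value of $z$. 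Taking $e,f$ to be the positions of the unique zero of $v$ (respectively of $w$) in $\Omega_i,\Omega_j$, one checks that $(e,f)$ fixes $z,v,w,v^{\sigma},w^{\sigma}$: at such a position $v_e=v_f=0$ and $v^{\sigma}_e=v^{\sigma}_f=v^{\sigma^2}_e=v^{\sigma^2}_f=1$, while the equality of types forces the values of $w$ and of $w^{\sigma}$ at $e$ and $f$ to coincide, and $z_e=z_f$ by assumption. Hence $(e,f)\in\PAut(C)$.

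Feeding the second construction with the pigeonhole principle (only two values of $z$ occur), I would dispose of every case except one. If $s\geq 3$, $t\geq 3$, or $l_a\geq 3$ for some $a$, then some type contains two blocks with equal $z$, and we are done; the same holds if any type with exactly two blocks has those two blocks sharing a $z$-value. The only surviving configuration is that each type has at most two blocks and every type with two blocks has them with opposite $z$-values. This gives $l_0,l_1,l_2,s,t\leq 2$, hence $m\leq 10$; together with $m\geq 10$ it forces $m=10$ and $l_0=l_1=l_2=s=t=2$, with the two blocks of each type carrying opposite values of $z$. This is precisely where the hypothesis $n\geq 30$ enters.

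In this last configuration $s=t=2$, so $wt(v)=wt(w)$ and I can run the involution construction from the proof of Theorem~\ref{thm:3} on $E_{\sigma}(C)$: within-block transpositions on the blocks of $J_v\cap J_w$ together with block-swaps $\hat\alpha_j$ pairing $J_v\setminus J_w$ with $J_w\setminus J_v$. The pairing is free, and since each of $J_v\setminus J_w$ and $J_w\setminus J_v$ contains exactly one block with $z=000$ and one with $z=111$, I pair equal $z$-values. Then every within-block transposition fixes $z$, and every $\hat\alpha_j$ interchanges two blocks on which $z$ agrees, so the resulting involution $\alpha$ satisfies $z^{\alpha}=z$ as well as $v^{\alpha}=w$ and $\alpha^{-1}\sigma\alpha=\sigma^{-1}$; hence $\alpha$ stabilizes both $F_{\sigma}(C)=\langle z\rangle$ and $E_{\sigma}(C)$, so $\alpha\in\PAut(C)$, the final contradiction. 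The main obstacle throughout is controlling the fixed part $z$: because $\alpha^{-1}\sigma\alpha=\sigma^{-1}$ makes $z^{\alpha}$ again $\sigma$-invariant, $z^{\alpha}$ lies in $C$ only when $z^{\alpha}=z$, so a $\sigma$-inverting involution belongs to $\PAut(C)$ exactly when it fixes $z$ on the nose. The whole point of the case analysis is that the single configuration in which the Theorem~\ref{thm:3} involution might move $z$ is exactly the configuration whose $z$-values are balanced enough to admit a $z$-preserving matching.
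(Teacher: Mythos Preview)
Your proof is correct and follows the same overall strategy as the paper---produce an involution in $\PAut(C)$---but the case analysis is organized differently and is in fact more thorough. The paper's proof first invokes the non-distinct $\sigma$-weight hypothesis to assume $wt(v)=wt(w)$ (so $s=t$), and then treats three cases: $s=t\geq 3$, some $l_i\geq 3$, and $r\geq 1$, each yielding a transposition fixing all five basis vectors. Your ``second construction'' unifies and sharpens the first two: a transposition between the zero-positions of $v$ (or $w$) in two blocks of the same type and the same $z$-value already suffices, so pigeonhole handles any type of size at least three as well as any two-block type with matching $z$-values. This refinement is what lets you dispose cleanly of the extremal configuration $m=10$, $l_0=l_1=l_2=s=t=2$, $r=0$, which the paper's three cases do not visibly cover: there $s=t$ holds automatically, and the forced balance of $z$-values on each type allows you to choose the pairing in the Theorem~\ref{thm:3} involution so that $z$ is preserved. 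A side benefit is that your argument never actually uses the non-distinct $\sigma$-weight hypothesis, so it proves directly that no $5$-dimensional binary cubic code of length $n\geq 30$ has $\PAut(C)=\langle\sigma\rangle$, merging Propositions~\ref{pro:313} (for $n\geq 30$) and~\ref{pro:314} into a single argument.
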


\begin{proof}
Let $\sigma$ be a fixed point free permutation of order $3$ in $S_n$ where $n\geq 30$.  Suppose that $C$ is a $5$-dimensional linear code such that
${\rm PAut}(C)=\langle \sigma \rangle$ and $E_{\sigma}(C)$ is non-distinct  $\sigma$-weight code. By Corollary \ref{cor:3} and  by Lemma \ref{lem:1}, we
have a basis $\{v, w, v^{\sigma}, w^{\sigma}\}$ for  $E_{\sigma}(C)$.  Since  $E_{\sigma}(C)$ is a non-distinct  $\sigma$-weight code, without loss of
generality we can assume that $wt(v)=wt(w)$. Let $z$ be the basis element for $F_{\sigma}(C)$. Let $l_0, l_1, l_2, r, s, t$ be the set of integers as
defined in the proof of  Proposition \ref{prop:1}. Recall that $n=3(l_0+l_1+l_2+s+t+r)$. As  $wt(v)=wt(w)$, it is clear that $s=t$. If $s=t\geq 3$, then
we can construct an involution which fixes all basis elements as in the Case 2 of the proof of Proposition \ref{pro:313}. There are linear  codes  such
that  $l_i\geq 3$ for some $i\in \{0,1,2\}$, if this is the case  we can construct an involution which fixes all basis elements as in the Case 1  of the
proof of Proposition \ref{pro:313}.

It should be mentioned that  there are some linear codes with the property that  $l_i\leq 2$ for all $i$ and $s=t\leq 2$. So we should consider the case
$r\geq 1$, that is the number of common zero blocks of $v$ and $w$ is at least one. Let $v|_{\Omega_i}=w|_{\Omega_i}=000$. The corresponding block of
$z$ is $z|_{\Omega_i}\in \{000,111\}$.  In fact we have a submatrix of $G$ in the form
\[
\begin{bmatrix}
    aaa   \\
    000 \\
   000 \\
 000 \\
     000
\end{bmatrix}
\]
where $a\in \{0,1\}$.
So any transposition on this block (any transposition changing the columns of the above submatrix) fixes all basis elements.

In all cases, we construct an involution in $\PAut(C)$, so such a $C$ does not exist.
\end{proof}

\begin{proof}[Proof of Theorem \ref{main1}]
Follows from  Corollary \ref{cor:3}, Corollary \ref{dim4}, Proposition \ref{pro:313} and Proposition \ref{pro:314}.

\end{proof}

\begin{proof}[Proof of Theorem \ref{main2}] Follows from Corollary \ref{cor:3}, Corollary \ref{dim4}.

\end{proof}

\begin{Remark}\label{rem1}

Whenever $E_{\sigma}(C)$ is a non-distinct $\sigma$-weight code with length $n< 30$, we can not construct transpositions as in the proof of Proposition
\ref{pro:314}. For example, for the length $n=6$, let $\sigma=(1,2,3)(4,5,6)$ and  consider a binary linear code with the following generator matrix

\[ G=
\begin{bmatrix}
    z       \\
    v \\
v^{\sigma}\\
   w
     \\ w^{\sigma}
\end{bmatrix}
=
\begin{bmatrix}
    111&000&  \\
    110  & 000&  \\
 011  & 000&
     \\

000  & 101 & \\
     000  & 110&
\end{bmatrix}.
\]

In this case there is no transposition or involution fixing all basis elements.
 However, the involution $\alpha=(1,3)(5,6)$ satisfies $z^{\alpha}=z$,
$v^{\alpha}=v^{\sigma}$,${v^{\sigma}}^{\alpha}=v$ $w^{\alpha}=w^{\sigma}$,  ${w^{\sigma}}^{\alpha}=w$. Hence $\alpha \in \PAut(C)$. Moreover
calculations in MAGMA give that $\PAut(C)\cong S_3\times S_3$

Now let $n=9$. Let $\sigma=(1,2,3)(4,5,6)(7,8,9)$ and consider the  code generated by the following matrix
\[ G=
\begin{bmatrix}
    z       \\
    v \\
 v^{\sigma}\\
   w \\
     w^{\sigma}
\end{bmatrix}
=
\begin{bmatrix}
    111  &000&  111&\\
    110  & 000&  101& \\
 011   & 000& 110 &\\
    000  & 101 &   110 &\\

     000  & 110 & 011&
\end{bmatrix}.
\]

In this case  $\alpha=(1,2)(5,6)(7,9)$  acts on the basis elements as follows
$$z^{\alpha}=z, \ v^{\alpha}=v,\ w^{\alpha}=w^{\sigma}, \ {(v^{\sigma})}^{\alpha}=v+ v^{\sigma}, \ {(w^{\sigma})}^{\alpha}=w.$$  We observe from the
calculations carried out in MAGMA that $\tau=(1,3)(4,5)(8,9)\in \PAut(C)$ and $K=\langle (1,9),(2,7),(3,8)\rangle$ is an elementary abelian $2$-subgroup of
order $8$ of ${\rm PAut}(C)$. Let
$L= K \rtimes {\langle \sigma \rangle} $ then
${\rm PAut}(C) = L \rtimes {\langle \tau \rangle} $.

For $n=12$,  let $\sigma=(1,2,3)(4,5,6)(7,8,9)(10,11,12)$ and consider the  code generated by the following matrix
\[ G=
\begin{bmatrix}
    z       \\
    v \\
 v^{\sigma}\\
   w \\
     w^{\sigma}
\end{bmatrix}
=
\begin{bmatrix}
    111  & 000&  111& 111\\
    110  & 000&  101& 000\\
    011  & 000&  110& 000\\
    000  & 110&  101& 101   \\
    000  & 011&  110& 110
\end{bmatrix}.
\]
 The involution $\alpha=(1,3)(4,6)(8,9)(11,12)$ acts on the basis elements as
$$v^{\alpha}=v^{\sigma},~~{v^{\sigma}}^{\alpha}=v,~~w^{\alpha}=w^{\sigma},~~{w^{\sigma}}^{\alpha}=w, z^{\alpha}=z.$$ So $\alpha \in \PAut(C)$ and
calculations in MAGMA give that$|\PAut(C)|=12$.

\end{Remark}

\begin{Remark}\label{Calc}
The examples in Remark \ref{rem1} show that under the assumption when $n< 30$, we can not find a pattern to construct an involution in $\PAut(C)$.

For $5$-dimensional cubic codes of length $n$, we obtain some computational results for the case where $n\in\{6,9,12,15,18,21\}$. These results are
listed in Table~\ref{tab:1}. Indeed, using the construction methods in \cite{LingSole}(see Theorem 5.1 and section VI, the case $m=3$), we construct all
cubic codes of length $n$, where $n\in\{6,9,12,15,18,21\}$ and determine the order of the permutation automorphism groups of these codes.  

For $n\in \{24, 27\}$, with respect to our computer facilities, the calculations seem to take years. Up to date, we get some partial results as follows:
\begin{itemize}
\item For the length $n=24$, with respect to construction methods in \cite{LingSole}, there are  $8206520925$ cubic codes and up to date, we get
    $5559$ different cubic codes up to equivalence. Among these  cubic codes, the smallest order for the permutation automorphism group is $48$ and
    the group structure for these permutation automorphism groups of order $48$ is $C_2\times S_4$.

\item For the length $n=27$, with respect to construction methods in \cite{LingSole}, there are $263945834061$ cubic codes and up to date, we get
    $12789$ different cubic codes up to equivalence. Among these cubic codes, the smallest order for the permutation automorphism is $48$ and the
    group structure for these permutation automorphism groups of order $48$ is $C_2\times S_4$.

\item For the length $n\in\{24,27\}$, these partial results support the fact that a cubic code of dimension $5$
and length $n\in \{24,27\}$ has permutation automorphism group of order greater than $3$.
All computations are done in SAGEMATH.
\end{itemize}

\end{Remark}

\begin{table}[h]
\begin{tabular}{ |c|c|c|c|c|c|c| }
 \hline
 \multirow{1}{15em}{\small{Length of Cubic Codes of dimension five }}& \small{6}& \small{9}&\small{12}&\small{15}&\small{18}&\small{21}
\\&&&&&&\\

 \hline

 \multirow{2}{15em}{\small{The number of different Cubic Codes up to equivalence}}&\small{2} &\small{15}&\small{67}&\small{244}&\small{765}&\small{$
 2142$}\\&&&&&&\\

 \hline

\multirow{2}{15em}{\small{Minimum order of the permutation automorphism groups}}&
\small{36}&\small{12}&\small{6}&\small{6}&\small{6}&\small{12}\\&&&&&&\\

 \hline

 \multirow{2}{15em}{\small{Structure of automorphism groups with minimum order}}&\small{ $S_3\times
 S_3$}&\small{$D_{12}$}&\small{$S_3$}&\small{$S_3$}&\small{$S_3$}&\small{$D_{12}$}\\&&&&&&\\

 \hline

\end{tabular}
\vspace{.3cm}
\caption{Order of Permutation Automorphism Groups of $5$-dimensional\\  Cubic Codes of Length$\leq 21$}
~\label{tab:1}
\end{table}

\begin{Remark}

There are examples of cubic codes such that $\PAut(C)=\langle \sigma \rangle$. In the next section, we manage to construct such codes whenever
$\dim(C)\geq 6$. See  Example \ref{Dim6length18}, Example \ref{Dim10length30} and Example \ref{Dim18length36}.
\end{Remark}

\section{Some results on higher dimensional cubic codes}
In this section, we provide some generalizations of the ideas that are used in the previous sections. We construct some cubic codes of dimension greater than or equal to  $6$ having permutation automorphism group of order $3$. This bound on the dimension emphasizes the importance of the results for dimension less than or equal to $ 5$.
During this progress, on one side, we present some results for the structure of higher dimensional cubic codes and on the other side, we apply these
general ideas to give some conditions on the structure of the putative extremal  binary self-dual $[72, 36, 16]$ code having permutation automorphism
group of order $3$.

Let $C$ be a binary linear code of length $n=3m$ and $$\sigma=(1,2,3)(4,5,6)\dots (n-2,n-1,n)$$ be a fixed point free
automorphism in $\PAut(C)$.

\begin{Definition}   A rectangular block diagonal matrix is a $rk \times sk$ rectangular matrix
$${\rm diag}(D_1,D_2,\dots, D_k)=\left[
\begin{array}{cccccc}
  D_1 & \bigzero & \bigzero& \bigzero&\dots&\bigzero \\

  \bigzero & D_2&\bigzero&\bigzero&\dots &\bigzero\\

\vdots &\vdots& \ddots &\vdots&\vdots&\vdots \\

\bigzero & \bigzero&\dots& D_i& \dots&\bigzero  \\

\vdots &\vdots& \ddots &\vdots&\ddots&\vdots \\

\bigzero & \bigzero&\dots& \bigzero & \dots& D_k
\end{array}
\right]$$
where $k,r,s$ are positive integers and  all $D_i$'s on the diagonal block are $r\times s$ matrices and $\bf{0}$'s are $r\times s$ zero matrices.
\end{Definition}

Consider the following matrices:
$$E_0=\bf{0}=\begin{bmatrix}
    0  &0&0&\\
    0  &0&0&
\end{bmatrix},~ E_1=\begin{bmatrix}
    1  &0&1&\\
    0 &1&1&
\end{bmatrix},~E_2=\begin{bmatrix}
    1  &1&0&\\
    1 &0&1&
\end{bmatrix},~ E_3=\begin{bmatrix}
    0  &1&1&\\
    1  &1&0&
\end{bmatrix}.
$$

\begin{Lemma}\label{rref} Let $C$ be a binary linear code of length $3m$ such that $\sigma\in {\PAut}(C)$. If $\dim (E_{\sigma}(C))=2k$, then
$E_{\sigma}(C)$ has a generating matrix $G$ such that, up to permutation equivalence, the row reduced echelon form of $G$ is in the form
$\left[
\begin{array}{c|c}
\bf{E}& \bf{M} \end{array}\right]$,
where  ${\bf E}={\rm diag}(E_1,E_1,\dots, E_1)$ is a  $2k\times 3k$ rectangular block diagonal matrix
 and
 ${\bf{M}}$ is a $2k\times 3(m-k)$ block matrix whose blocks are in
$\{E_0,E_1,E_2,E_3\}$.

\end{Lemma}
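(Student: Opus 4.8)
The plan is to exploit the fact that $E_{\sigma}(C)$ carries a natural $\FF_4$-module structure, and to deduce the desired $\FF_2$-row reduced echelon form from an $\FF_4$-generator matrix in standard form. First I would record the structural observation that governs the shape of the blocks $E_0,E_1,E_2,E_3$. By Lemma~\ref{sum}, every $v\in E_{\sigma}(C)$ satisfies $v^{\sigma^2}+v^{\sigma}+v=0$, so $\sigma$ acts on $E_{\sigma}(C)$ as a root of $x^2+x+1$; since $E_{\sigma}(C)$ is $\FF_2$-linear and $\sigma$-invariant, this turns it into a module over $\FF_2[x]/(x^2+x+1)\cong\FF_4$, which re-explains why $\dim(E_{\sigma}(C))$ is even. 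Concretely, each $3$-block of a codeword in $E_{\sigma}(C)$ lies in $\{000,110,101,011\}$, and I would fix the $\FF_2$-linear isomorphism $\phi\colon\FF_4\to\{000,110,101,011\}$ determined by $\phi(1)=101$ and $\phi(\omega)=110$ (so $\phi(\omega^2)=011$). A direct check shows that applying $\sigma$ to a block corresponds exactly to multiplication by $\omega$, so blockwise $E_{\sigma}(C)$ becomes an $\FF_4$-linear code of length $m$ and $\FF_4$-dimension $k$.

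Next I would put this $\FF_4$-code into standard form. Any $\FF_4$-code of dimension $k$ admits an information set of $k$ coordinates; permuting the corresponding $3$-blocks to the front is a permutation equivalence under which $\sigma$ keeps its standard cycle form, and it yields an $\FF_4$-generator matrix $[\,I_k\mid A\,]$ with rows $g_1,\dots,g_k$ (the existence of such a basis is also guaranteed, after passing through $\phi$, by Lemma~\ref{lem:1}). Expanding back over $\FF_2$, the pairs $\{g_i,g_i^{\sigma}\}$ give $2k$ rows spanning $E_{\sigma}(C)$. On the $i$th pivot block these two rows read $\phi(1)=101$ and $\phi(\omega)=110$, i.e.\ they form $E_2$; adding the first row of the pair to the second turns the second into $101+110=011$, so the pivot block becomes $E_1$, because $\phi$ is additive and $\phi(1)+\phi(\omega)=\phi(\omega^2)=011$.

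Then I would verify that after these single row operations the matrix is genuinely in $\FF_2$-row reduced echelon form: the leading ones of the $i$th pair sit in the first two columns of the $i$th block, every other generator has a zero $3$-block there (since $A$ is the identity on the pivot columns), and each pivot column is clean. Hence the first $3k$ columns equal $\mathbf{E}=\mathrm{diag}(E_1,\dots,E_1)$. For the off-pivot part $\mathbf{M}$, after the same row operation the two rows attached to $g_i$ in a non-pivot block $j$ become $\phi(A_{ij})$ and $\phi(A_{ij})+\phi(\omega A_{ij})=\phi(\omega^2A_{ij})$; letting $A_{ij}$ run through $0,1,\omega,\omega^2$ produces exactly $E_0,E_1,E_2,E_3$, so every block of $\mathbf{M}$ lies in $\{E_0,E_1,E_2,E_3\}$, which is the claim.

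The only genuinely delicate point is the bookkeeping of the isomorphism $\phi$: one must choose it so that $\sigma$ becomes multiplication by $\omega$, and then confirm that the unavoidable $\FF_2$-reduction of the pivot blocks converts $E_2$ into $E_1$ while simultaneously carrying every non-pivot block into the prescribed four-element set. All of this is a finite check once $\phi$ is pinned down; the conceptual input is simply that $E_{\sigma}(C)$ is an $\FF_4$-code, which reduces the whole statement to the standard existence of an information set.
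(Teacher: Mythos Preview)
Your argument is correct and takes a genuinely different, more structural route than the paper. The paper proceeds by direct Gaussian elimination over $\FF_2$: starting from the basis $\{v_1,v_1^\sigma,\dots,v_k,v_k^\sigma\}$ supplied by Lemma~\ref{lem:1}, it permutes $3$-block columns so that the leading block of each pair is nonzero, normalizes that block to $101$ by a further permutation, adds the first row of the pair to the second to obtain $E_1$, and then clears the column below by noting that every other row pair restricted to that block lies in the $\FF_2$-span of $(101,110)$; the process is iterated block by block. Your approach instead recognises $E_\sigma(C)$ as an $\FF_4$-linear code of length $m$ via the block identification $\phi$, so a single appeal to the existence of an information set gives an $\FF_4$-generator $[I_k\mid A]$ after permuting whole $3$-blocks, and the final shape drops out of one uniform row operation per pair. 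This buys you a cleaner explanation of \emph{why} exactly the four blocks $E_0,E_1,E_2,E_3$ appear (they are the images of $0,1,\omega,\omega^2$ under $a\mapsto(\phi(a),\phi(\omega^2 a))$) and it avoids any permutation inside a $3$-block. One small slip to fix: the pre-reduction pivot block with rows $101,110$ is not $E_2$ (whose rows are $110,101$); it is none of the four listed matrices, but this is harmless since after adding the first row to the second you correctly obtain $E_1$, and your description of the $\mathbf{M}$-blocks is accurate.
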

\begin{proof} Assume that $C$ is a binary linear code with $\sigma\in \PAut(C)$. If ${\rm dim}(E_{\sigma}(C))=2k$, then Lemma \ref{lem:1} guarantees the
existence of a basis for $E_{\sigma}(C)$ of the following form $$\{v_1, v_1^{\sigma}, v_2, v_2^{\sigma}, \dots, v_k, v_k^{\sigma} \}.$$
Let $G$ denote the associated generating matrix of $E_{\sigma}(C)$. Thus $G$ has the following form

\[G=\begin{bmatrix}
    v_1       \\
    v_1^{\sigma} \\
v_2\\
v_2^{\sigma}\\
\vdots\\
   v_k \\
    v_k^{\sigma}
\end{bmatrix}.
\]

Observe that, for $i\in\{1,2,\dots, k\}$ and $j\in\{1,2,\dots, m\}$ we have $${v_i|}_{\Omega_j}, {v_i^{\sigma}|}_{\Omega_j}  \in \{101, 110,
011,000\}.$$
 
Up to permutation equivalence we can choose the first row $v_1$ so that $v_1|_{\Omega_1}=101$. 
Then $v_1^{\sigma}|_{\Omega_1}=110$. By adding the first row two the second row we get the first block as $E_1$.  For $j\in \{2,3,4,\dots, k\}$, we have ${v_j|}_{\Omega_1},{v_j^{\sigma}|}_{\Omega_1}\in\{v_1|_{\Omega_1}, v_1^{\sigma}|_{\Omega_1}, v_1^{\sigma^2}|_{\Omega_1}, 000 \} $. Appropriate elementary row operations result in the appearance of  zero blocks in the first $3$-block columns.  If $v_2|_{\Omega_2}=000$, we can consider an appropriate permutation which changes the places of second $3$-block columns with a non-zero $3$-block columns so that for this equivalent code, we have $v_2|_{\Omega_2}\neq 000$.   If $v_2|_{\Omega_2}= 101$ then apply the same procedure as in the first part. If $v_2|_{\Omega_2}\neq 101$, then $v_2|_{\Omega_2}+v_2^{\sigma}|_{\Omega_2}= 101$. Now apply the procedure in the first part. The rest follows from the same procedure and elementary row operations. Note that whenever we apply that appropriate  permutation to $E_{\sigma}(C)$, we also apply it to $F_{\sigma}(C)$ subcode to get a code which is permutation equivalent to $C$.

\end{proof}

\begin{Example}  Consider the binary linear code $C$ generated by the following matrix\[ 
\begin{bmatrix}
    z      \\
    v_1 \\
 v_1^{\sigma}\\
   v_2 \\
     v_2^{\sigma}
\end{bmatrix}
=
\begin{bmatrix}
    111  & 000&  000& 111 & 000\\
    000  & 110&  101& 000& 101\\
    000  & 011&  110& 000& 110\\
    000  & 101&  101& 101 & 011 \\
    000  & 110&  110& 110& 101
\end{bmatrix}.
\]
For the basis  elements $v_1, v_2$ of $E_{\sigma}(C)$, we have ${v_1|}_{\Omega_1}={v_2|}_{\Omega_1}=000$. So by applying the permutation $(1,13)(2,14)(3,15)$ to the code, we get the equivalent code generated by the matrix
\[\begin{bmatrix}
    000  & 000&  000& 111 & 111\\
    101  & 110&  101& 000& 000\\
    110  & 011&  110& 000& 000\\
    011  & 101&  101& 101 & 000 \\
    101  & 110&  110& 110& 000
\end{bmatrix}.\] Now we replace the code $C$ by the new equivalent one. Consider the generating matrix of $E_{\sigma}(C)$ seperately. For the matrix
\[G=\begin{bmatrix}
    
    101  & 110&  101& 000& 000\\
    110  & 011&  110& 000& 000\\
    011  & 101&  101& 101 & 000 \\
    101  & 110&  110& 110& 000
\end{bmatrix},\] by adding first row to the second row we get 

\[\begin{bmatrix}
    101  & 110&  101& 000& 000\\
    011  & 101&  011& 000& 000\\
    011  & 101&  101& 101 & 000 \\
    101  & 110&  110& 110& 000
\end{bmatrix}.\] By adding first row to the fourth one and second row to the third one we obtain the matrix 

\[\begin{bmatrix}
    
    101  & 110&  101& 000& 000\\
    011  & 101&  011& 000& 000\\
    000  & 000&  110& 101 & 000 \\
    000  & 000&  011& 110& 000
\end{bmatrix}.\] 
Now in this matrix $v_2|_{\Omega_2}=000.$
So apply the permutation $(4,10)(5,11)(6,12)$ to get the code generated by the matrix
\[\begin{bmatrix}
    101  & 000&  101& 110& 000\\
    011  & 000&  011& 101& 000\\
    000  & 101&  110& 000 & 000 \\
    000  & 110&  011& 000& 000
\end{bmatrix}.\] By adding the third row to the fourth one we get 
\[\begin{bmatrix}
    101  & 000&  101& 110& 000\\
    011  & 000&  011& 101& 000\\
    000  & 101&  110& 000 & 000 \\
    000  & 011&  101& 000& 000
\end{bmatrix}\] which is in the form
$\left[
\begin{array}{c|c}
\bf{E}& \bf{M} \end{array}\right]$.

Observe that the codes generated by the following generating matrices are permutation equivalent 
\[\begin{bmatrix}
    111  & 000&  000& 111 & 000\\
    000  & 110&  101& 000& 101\\
    000  & 011&  110& 000& 110\\
    000  & 101&  101& 101 & 011 \\
    000  & 110&  110& 110& 101
\end{bmatrix},~~
\begin{bmatrix}
    000   & 111& 000& 000 &111\\
    101  & 000&  101& 110& 000\\
    110  & 000&  110& 011& 000\\
    011  & 101&  101& 101 & 000 \\
    101  & 110&  110& 110& 000
\end{bmatrix}.
\] via the permutation $\gamma=(1,13)(2,14)(3,15)(4,10)(5,11)(6,12)$ and 
the matrices 
\[\begin{bmatrix}
    000   & 111& 000& 000 &111\\
    101  & 000&  101& 110& 000\\
    110  & 000&  110& 011& 000\\
    011  & 101&  101& 101 & 000 \\
    101  & 110&  110& 110& 000
\end{bmatrix}, 
\begin{bmatrix}
    000 & 111 & 000 &000& 111\\
    101  & 000&  101& 110& 000\\
    011  & 000&  011& 101& 000\\
    000  & 101&  110& 000 & 000 \\
    000  & 011&  101& 000& 000
\end{bmatrix}
\] are row equivalent.

\end{Example}
\begin{Proposition}\label{pro:3k2k} Let $C$ be a binary linear code of length $3k\geq 3$ and $\sigma \in {\PAut}(C)$. If ${\rm dim}(E_{\sigma}(C))=2k$,
then ${\PAut}(C)\neq \langle \sigma \rangle $.

\end{Proposition}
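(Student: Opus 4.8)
The plan is to show that when $\dim(E_{\sigma}(C)) = 2k$ and the total length is exactly $3k$, there is \emph{no room} for anything but the fixed subcode to sit over the $k$ blocks that are ``used up'' by the $E_1$-diagonal, forcing $F_{\sigma}(C)$ to be trivial and hence $C = E_{\sigma}(C)$. Indeed, applying Lemma~\ref{rref} with $m = k$, we may assume after a permutation equivalence that $E_{\sigma}(C)$ has a generating matrix in row reduced echelon form $\left[\begin{array}{c|c}\bf{E}&\bf{M}\end{array}\right]$. But here $\bf M$ is a $2k\times 3(m-k) = 2k\times 0$ block matrix, i.e.\ empty, so the generating matrix is exactly $\bf E = {\rm diag}(E_1,\dots,E_1)$. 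Thus $C = E_{\sigma}(C)$ is a $2k$-dimensional code of length $3k$, and each $3$-block of $C$ is spanned (over the pair of rows $v_i, v_i^{\sigma}$ landing in block $i$) by the two rows of $E_1$, namely $101$ and $110$.

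Having reduced to this normal form, the next step is to construct an explicit involution $\theta \in \PAut(C)$ distinct from the elements of $\langle\sigma\rangle$, which will contradict $\PAut(C) = \langle\sigma\rangle$. The key observation is that within a single block, the two codewords $101$ and $110$ together with their sum $011$ are exactly the three nonzero blocks of $E_{\sigma}$, and any transposition of two of the three coordinates permutes this set. More to the point, in block $i$ the $i$th pair of basis vectors restricts to $E_1$ and the other $k-1$ pairs of basis vectors restrict to $000$ in block $i$ (since $\bf E$ is block diagonal). So I would take, for each block $i$, the transposition $\alpha_i$ swapping the two coordinate positions where $v_i|_{\Omega_i} = 101$ has its $1$'s, i.e.\ the first and third positions of block $i$; then set $\theta = \prod_{i=1}^{k}\alpha_i$. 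Since $\alpha_i$ fixes every basis vector except possibly the pair supported on block $i$, and on that pair it acts by $(101)^{\alpha_i} = 101$, $(110)^{\alpha_i} = 011 = 101 + 110$, one checks directly that $\theta$ maps the span to itself, so $\theta \in \PAut(C)$.

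Finally I would verify $\theta \neq \id$, $\theta \neq \sigma$, and $\theta \neq \sigma^2$, so that $\PAut(C) \supsetneq \langle\sigma\rangle$. Non-triviality is immediate since $\theta$ is a product of genuine transpositions; and $\theta$ is an involution while $\sigma, \sigma^2$ have order $3$, so $\theta \notin \{\sigma,\sigma^2\}$. This completes the argument, giving $\PAut(C) \neq \langle\sigma\rangle$.

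The main obstacle I anticipate is \emph{verifying the action of each $\alpha_i$ on the full basis}, not just on the two rows supported on block $i$. Because the reduced form is block diagonal, every other basis pair restricts to $000$ in block $i$ and so is fixed by $\alpha_i$; but one must be careful that the chosen transposition genuinely stabilizes the two-dimensional space $\langle 101, 110\rangle$ in that block rather than merely permuting a basis of it in a way that leaves the span fixed. The cleanest way to handle this is to note that $\langle 101,110\rangle = \{000,101,110,011\}$ is precisely the set of even-weight vectors of length $3$, which is invariant under \emph{every} permutation of the three coordinates, so any transposition $\alpha_i$ automatically fixes this block-code setwise; hence $\theta$ preserves $C$ block by block. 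This reduces the potential difficulty to a single clean invariance statement.
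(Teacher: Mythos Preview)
Your claim that $F_{\sigma}(C)$ must be trivial is false, and this is a genuine gap. The fact that $\mathbf{M}$ is empty only tells you that $E_{\sigma}(C)$ is generated by $\mathbf{E}$; it says nothing about $F_{\sigma}(C)$. Concretely, take $k=1$ and $C=\FF_2^3$: then $\sigma=(1,2,3)\in\PAut(C)$, $E_{\sigma}(C)=\{000,110,101,011\}$ has dimension $2=2k$, yet $F_{\sigma}(C)=\{000,111\}$ is nontrivial. So the deduction ``thus $C=E_{\sigma}(C)$'' does not go through, and your argument as written does not cover the case $\dim F_{\sigma}(C)\geq 1$. (There is also a small slip: the two rows of $E_1$ are $101$ and $011$, not $101$ and $110$, so the computation should read $(011)^{\alpha_i}=110=101+011$.)

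The repair is immediate and is exactly what the paper does: your involution $\theta$ is a product of transpositions each supported inside a single $3$-block, and every vector of $F_{\sigma}(C)$ is constant on each block, so $\theta$ fixes $F_{\sigma}(C)$ pointwise. Hence $\theta\in\PAut(C)$ regardless of whether $F_{\sigma}(C)$ vanishes. The paper's proof treats the two cases $C=E_{\sigma}(C)$ and $C\neq E_{\sigma}(C)$ separately and uses the transposition $\alpha_i=(3i-2,3i-1)$ swapping the zero positions of the two rows of $E_1$ (so that $\alpha$ simply interchanges $v_i$ and $v_i^{\sigma}$), but your choice of transposition works equally well by the observation you make at the end, that the even-weight code on three coordinates is invariant under every coordinate permutation. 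Once the erroneous triviality claim is removed and replaced by the one-line remark that $\theta$ fixes $F_{\sigma}(C)$, your argument is essentially the paper's.
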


\begin{proof} 
If $C=E_{\sigma}(C)$, then
Lemma \ref{rref} assures that ${\bf E}={\rm diag}(E_1,E_1,\dots, E_1)$ is a generating matrix of $C$.  For each $i\in \{1,2,\dots, k\}$, let
$\alpha_i=(3i-2,3i-1)$. Obviously, the transposition $\alpha_{i}$ changes the places of zeros in $i$th $E_1$ on ${\bf E}={\rm diag}(E_1,E_1,\dots, E_1)$ . 
 Set $\alpha=\prod_{i=1}^{k} \alpha_i$. It is clear that $\alpha \in {\PAut}(C)$.
  On the other hand, if $C\neq E_{\sigma}(C)$, then we have  $\dim (F_{\sigma}(C))\geq 1$.  Consider the involution $\alpha$ defined in the previous
  case.  Clearly $\alpha$ fixes each element in $F_{\sigma}(C)$ which implies that $\alpha \in{\PAut}(C)$.
\end{proof}

Consider the above-mentioned matrix $\bf{M}$. Let $M_i$ represent the $i$th $3$-block column of $\bf{M}$ so that we can rewrite $\bf{M}$ in  the following form:

$$\bf{M}=\left[
\begin{array}{c|c|c|c}
M_1&M_2&\dots& M_{m-k} \end{array}\right]$$

By definition of $\bf{M}$, each $M_i$ has the following form:
\[M_i=\begin{bmatrix}
    E_{i,1}   \\
    E_{i,2}\\
\vdots\\
   E_{i,k}
\end{bmatrix}
\] where $E_{i,j}\in \{E_0,E_1,E_2,E_3\}$, for each $j\in\{1,...,k\}$.

We progress by considering the following hypotheses on the structure of the matrix ${\bf M}$ which cover all possible cases.

{\bf{Hypothesis A}}: For each $i\in\{1,2,\dots, m-k\}$, the non-zero blocks of $M_i$ are all equal.

{\bf{Hypothesis B}}: For some $i\in\{1,2,\dots, m-k\}$, there exist $r,s\in\{1,...,k\}$ such that $E_{i,r}\neq E_{i,s}$ on  $M_i$.

\begin{Example}

$${\bf{M}}=\left[ \begin{array}{cccc}
  E_1 & \bigzero & \bigzero& \bigzero\\

  \bigzero & \bigzero&\bigzero&E_2\\

\bigzero & E_3&\bigzero&\bigzero\\

\bigzero & \bigzero&E_1 &\bigzero
\end{array}
\right],~~
{\bf{M'}}=\left[ \begin{array}{cccc}
  E_1 & \bigzero & \bigzero& \bigzero\\

  \bigzero & \bigzero&\bigzero&E_2\\

\bigzero & E_3&\bigzero&E_2\\

\bigzero & E_3&E_2 &\bigzero
\end{array}
\right],~~
{\bf{M''}}=\left[ \begin{array}{cccc}
  E_1 & E_3 & E_1& \bigzero\\

  E_2 & \bigzero&\bigzero&E_2\\

\bigzero & E_2&E_1&E_3\\

\bigzero & \bigzero&E_1 &\bigzero
\end{array}
\right]$$
Observe that the matrices $\bf{M}$ and $\bf{M'}$ satisfy Hypothesis A while the matrix $\bf{M''}$ satisfies Hypothesis B.
\end{Example}

\begin{Proposition}\label{permM} Let $C$ be a binary linear code of length $3m\geq 3$ and $\sigma \in {\PAut}(C)$. Assume that ${\rm
dim}(E_{\sigma}(C))=2k$. With the above notations, if the matrix ${\bf M}$ in the generating matrix  $\left[ \begin{array}{c|c}
\bf{E}&\bf{M}\end{array}\right]$ satisfies Hypothesis A, then ${\PAut}(C)\neq \langle \sigma \rangle $.

\end{Proposition}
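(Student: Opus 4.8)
The plan is to produce a nontrivial involution $\alpha\in\PAut(C)$; this alone forces $\PAut(C)\neq\langle\sigma\rangle$, since every nonidentity element of $\langle\sigma\rangle$ has order $3$. By the Huffman decomposition $C=F_{\sigma}(C)\oplus E_{\sigma}(C)$, and since any transposition acting inside a single $3$-block fixes the blocks $000$ and $111$, a permutation $\alpha=\prod_{b=1}^{m}\alpha_{b}$ built from one transposition $\alpha_{b}$ per $3$-block automatically fixes $F_{\sigma}(C)$ pointwise. Hence it suffices to choose the $\alpha_{b}$ so that the resulting $\alpha$ preserves $E_{\sigma}(C)$; then $\alpha\in\PAut(E_{\sigma}(C))\subseteq\PAut(C)$, and $\alpha$ is a genuine involution because its support is nonempty.

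The engine of the argument is the interaction between these transpositions and the matrices $E_{0},E_{1},E_{2},E_{3}$. On a single $3$-block the codewords of $E_{\sigma}(C)$ take values in the even-weight set $\{000,011,101,110\}$, on which $\sigma$ permutes the three nonzero values cyclically while each of the three transpositions fixes one of them and swaps the other two. Coordinatizing an even-weight vector by its first two bits identifies this $2$-dimensional space with $\FF_{4}$ in such a way that $E_{1},E_{2},E_{3}$ become the scalars $1,\omega,\omega^{2}$, the action of $\sigma$ on a block becomes multiplication by $\omega$, and the action of a within-block transposition becomes a semilinear reflection $x\mapsto\lambda x^{2}$ with $\lambda\in\FF_{4}^{\ast}$ (the three values of $\lambda$ matching the three transpositions). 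Under this dictionary the generating matrix $\left[\begin{array}{c|c}\mathbf{E}&\mathbf{M}\end{array}\right]$ of Lemma~\ref{rref} becomes an $\FF_{4}$-code with generator $\left[\begin{array}{c|c}I_{k}&\tilde{\mathbf{M}}\end{array}\right]$, and $\alpha$ becomes the coordinatewise map sending coordinate $b$ by $x\mapsto\lambda_{b}x^{2}$; thus choosing the transpositions $\alpha_{b}$ is exactly choosing scalars $\lambda_{b}\in\FF_{4}^{\ast}$.

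It then remains to choose the $\lambda_{b}$ so that this monomial-semilinear map preserves the $\FF_{4}$-code. Writing a codeword as $(x,x\tilde{\mathbf{M}})$ and using that squaring is additive over $\FF_{4}$, the requirement that the image again have this form reduces, coordinate by coordinate, to the conditions $\lambda_{j}=\rho_{i}\,\mu_{i,j}$ for every nonzero entry $\mu_{i,j}$ of $\tilde{\mathbf{M}}$, where $\lambda_{j}$ and $\rho_{i}$ denote the scalars on the information and parity coordinates. Here Hypothesis~A is decisive: it says precisely that in each column $i$ all nonzero entries share a common value $\nu_{i}$, so the constraints become $\lambda_{j}=\rho_{i}\nu_{i}$ with the scalar $\nu_{i}$ attached to the column rather than to the individual entry. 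Passing to the cyclic group $\FF_{4}^{\ast}\cong\Z/3$ written additively, these are difference constraints $p_{j}-q_{i}=n_{i}$ along the bipartite incidence graph of the support of $\tilde{\mathbf{M}}$; because the offset $n_{i}$ depends only on $i$, all edges at a fixed parity vertex impose the same value $q_{i}+n_{i}$ on their information neighbours, so no cycle can force an inconsistency and the system is solvable (assign freely on each connected component and propagate, taking $\lambda_{b}=1$ on unconstrained coordinates).

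I expect the main obstacle to be bookkeeping rather than a deep difficulty: one must set up the block-to-$\FF_{4}$ dictionary carefully, verify that a within-block transposition really does realize $x\mapsto\lambda x^{2}$, and keep track of the distinction between information and parity coordinates when reducing to the difference system. The genuinely load-bearing step is the solvability of that system, and the contrast with Hypothesis~B is instructive: if some column carried two different nonzero entries, the offset would become edge-dependent and a cycle could force $p_{j}-p_{j'}\neq 0$, so the construction could break down — which is exactly why that case must be treated separately.
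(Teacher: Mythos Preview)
Your proposal is correct and takes a more conceptual route than the paper. The paper simply writes down the involution explicitly: on each of the first $k$ blocks (the $\mathbf{E}$ part) it uses the transposition $(3i-2,3i-1)$, and on each $\mathbf{M}$-column it uses the unique transposition that interchanges the positions of the two zeros in the common nonzero block $E_r$ (Hypothesis~A guaranteeing there is a common one), then asserts that this product lies in $\PAut(E_\sigma(C))$ and fixes $F_\sigma(C)$. The tacit reason it works is that each such transposition swaps the two rows of the relevant $2\times 3$ block, so the net effect on the generating matrix $[\mathbf{E}\,|\,\mathbf{M}]$ is to swap rows $2s-1$ and $2s$ for every $s$, preserving the row space. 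Your $\FF_4$ reformulation and the resulting system of difference constraints $p_j-q_i=n_i$ give a structural explanation of why Hypothesis~A is exactly the right hypothesis, and the simplest solution of that system (take all $p_j$ equal, so all $\lambda_j=\omega$) is precisely the paper's explicit choice. The paper's argument is shorter and entirely elementary; yours is more illuminating, explains why Hypothesis~B can obstruct the construction, and would generalize more readily.
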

\begin{proof} Consider the generating matrix $\left[
\begin{array}{c|c}
\bf{E}&\bf{M}\end{array}\right]$ for $E_{\sigma}(C)$. We can define $\alpha=\prod_{i=1}^k\alpha_i$ on the matrix $\bf{E}$ in a similar way to the one
we defined earlier in the proof of Proposition \ref{pro:3k2k}. As ${\bf M}$ satisfies Hypothesis A, for each $i\in \{1,2\dots m-k\}$, whenever $M_{i}$
has a non-zero block $E_{i,j}$, then all the other non-zero blocks which appear in $M_{i}$ are equal to $E_{i,j}$. With this observation, whensoever
$M_{i}$ is a non-zero matrix, let $\beta_i$ be the transposition which interchanges the positions of zeros in the non-zero block $E_{i,j}$ of $M_{i}$.
Instead, if for some $l\in \{1,2,\dots, m-k\}$, the matrix $M_{i}$ has no non-zero block, then let $\beta_l$ be a transposition related to the column
numbers occupied by $M_{i}$ in the matrix ${\bf M}$. Finally, define $\beta=\prod_{i=1}^{m-k}\beta_i$. Apparently, $\alpha$ and $\beta$ are disjoint
involutions which implies that $\alpha\beta$ is an involution in $ \PAut (E_{\sigma}(C))$. Moreover, $\alpha\beta$ fixes each element of
$F_{\sigma}(C)$, henceforth $\alpha\beta \in \PAut(C)$ and ${\PAut}(C)\neq \langle \sigma \rangle $.
\end{proof}
\begin{Corollary}\label{ConditionB} Let $C$ be a binary linear code such that $\PAut(C)=\langle \sigma \rangle$. With the above notations, the matrix
${\bf{M}}$ in the generating matrix
$\left[
\begin{array}{c|c}
\bf{E}& \bf{M} \end{array}\right]$ of $E_{\sigma}(C)$ satisfies Hypothesis B.

\end{Corollary}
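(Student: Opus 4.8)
The plan is to prove the contrapositive. I would show directly that if the matrix $\bf{M}$ satisfies Hypothesis A, then $\PAut(C) \neq \langle \sigma \rangle$, which is exactly the statement of Proposition \ref{permM} just proved. The Corollary then follows by pure logic: the two hypotheses A and B are set up to be mutually exclusive and jointly exhaustive (Hypothesis B is the literal negation of Hypothesis A), so if the hypothesis $\PAut(C) = \langle \sigma \rangle$ holds, then Hypothesis A must fail by Proposition \ref{permM}, forcing Hypothesis B to hold.

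First I would invoke Lemma \ref{rref} to guarantee that, up to permutation equivalence, $E_{\sigma}(C)$ has a generating matrix of the form $\left[\begin{array}{c|c}\bf{E}& \bf{M}\end{array}\right]$, so that speaking of ``the matrix $\bf{M}$'' is legitimate. Since $\PAut(C) = \langle \sigma \rangle$ is assumed, in particular $\sigma$ is a fixed point free permutation of order $3$ in $\PAut(C)$, so all the structural results of the preceding section apply to $C$. The heart of the argument is the observation that exactly one of Hypotheses A and B can hold: Hypothesis A asserts that for every $i$, all non-zero blocks of $M_i$ coincide, whereas Hypothesis B asserts the existence of some $i$ and some pair $r,s$ with $E_{i,r} \neq E_{i,s}$. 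These are logical negations of one another (over the blocks of a fixed $\bf{M}$), and the text explicitly remarks that they ``cover all possible cases''.

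The contrapositive step is where Proposition \ref{permM} does all the work: if Hypothesis A held, that proposition would produce an involution $\alpha\beta \in \PAut(C)$, contradicting $\PAut(C) = \langle \sigma \rangle$ (which contains no involution, as $\sigma$ has order $3$). Hence Hypothesis A cannot hold, and therefore Hypothesis B must hold. I would phrase this as: suppose toward a contradiction that $\bf{M}$ satisfies Hypothesis A; then by Proposition \ref{permM} we have $\PAut(C) \neq \langle \sigma \rangle$, contrary to hypothesis.

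I do not anticipate any serious obstacle here, since this is a one-line corollary once Proposition \ref{permM} is in hand. The only subtlety worth a sentence is confirming that the dichotomy between Hypotheses A and B is genuinely exhaustive, so that failing A indeed means satisfying B; this is immediate from their formulations, as B is simply the negation of the universally quantified statement A. A clean way to write the whole proof is: \emph{By Lemma \ref{rref}, $E_{\sigma}(C)$ has a generating matrix of the form $\left[\begin{array}{c|c}\bf{E}& \bf{M}\end{array}\right]$. If $\bf{M}$ satisfied Hypothesis A, then Proposition \ref{permM} would give $\PAut(C) \neq \langle \sigma \rangle$, contradicting the assumption. Since Hypotheses A and B are mutually exclusive and exhaustive, $\bf{M}$ must satisfy Hypothesis B.}
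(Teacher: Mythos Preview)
Your proposal is correct and matches the paper's own proof, which simply states that the corollary is a direct consequence of Proposition~\ref{permM}. One small imprecision: Hypotheses A and B are not strictly mutually exclusive (a column $M_i$ with one zero block and one non-zero block satisfies both), but your argument only needs the implication ``not A $\Rightarrow$ B'', which does hold, so the proof goes through.
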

\begin{proof}
This is a direct consequence of Proposition ~\ref{permM}.
\end{proof}

\begin{Example} Let $C$ be a binary linear code such that $\sigma\in \PAut(C)$. Assume that $E_{\sigma}(C)$ has the generating matrix:
$$\left[ \begin{array}{ccccccc}
  E_1 & \bigzero & \bigzero& \bigzero&\bigzero& E_2 &\bigzero\\

  \bigzero & E_1&\bigzero&\bigzero&E_1&\bigzero&\bigzero\\

\bigzero & \bigzero&E_1&\bigzero&\bigzero &\bigzero& E_3\\

\bigzero & \bigzero&\bigzero & E_1&\bigzero&\bigzero&\bigzero
\end{array}
\right]=\left[ \begin{array}{c|c|c|c|c|c|c}
  101 & 000 & 000& 000&000& 101 &000\\
  011 & 000 & 000& 000&000& 011 &000\\
  \hline
  000 & 101&000&000&011&000&000\\
  000 & 011&000&000&110&000&000 \\

\hline
000 &000&101&000&000 &000& 110\\
000 &000&011&000&000 &000& 101\\
\hline

000 &000&000 & 101&000&000&000\\
000 &000&000 & 011&000&000&000
\end{array}
\right]$$

With the notations mentioned in the proof of Proposition ~\ref{permM}, observe that $\alpha_1=(1,2)$, $\alpha_2=(4,5)$, $\alpha_3=(7,8)$, and
$\alpha_{4}=(10,11)$. Additionally, $\beta_1=(13,15)$, $\beta_2=(16, 17)$, and $\beta_3=(20,21)$. Therefore,
$$\alpha\beta=(1,2)(4,5)(7,8)(10,11)(13,15)(16,17)(20,21)$$
Clearly, $\alpha\beta$ fixes any element of $F_{\sigma}(C)$, hence it is an element of $\PAut(C)$.
\end{Example}

\begin{Example} Let $C$ be a linear code such that $\sigma\in \PAut(C)$. Let $E_{\sigma}(C)$ has the following generating matrix
$$\left[ \begin{array}{ccccccc}
  E_1 & \bigzero & \bigzero& \bigzero&\bigzero& E_1 &\bigzero\\

  \bigzero & E_1&\bigzero&\bigzero&E_3&\bigzero&E_2\\

\bigzero & \bigzero&E_1&\bigzero&\bigzero &E_1& E_2\\

\bigzero & \bigzero&\bigzero & E_1&\bigzero&\bigzero&E_2
\end{array}
\right]=\left[ \begin{array}{c|c|c|c|c|c|c}
  101 & 000 & 000& 000&000& 011 &000\\
  011 & 000 & 000& 000&000& 110 &000\\
  \hline
  000 & 101&000&000&110&000&101\\
  000 & 011&000&000&101&000&011 \\

\hline
000 &000&101&000&000 &011& 101\\
000 &000&011&000&000 &110& 011\\
\hline

000 &000&000 & 101&000&000&101\\
000 &000&000 & 011&000&000&011
\end{array}
\right]$$
 Then $\alpha_1=(1,2), ~ \alpha_2=(4,5),~ \alpha_3=(7,8), ~\alpha_{4}=(10,11)$.
$\beta_1=(14,15), \beta_2=(16, 18), \beta_3=(19,20)$. Thus
$$\alpha\beta=(1,2)(4,5)(7,8)(10,11)(14,15)(16,18)(19,20)$$
It is clear that $\alpha\beta$ fixes any element of $F_{\sigma}(C)$ and hence it is an element of $\PAut(C)$.
\end{Example}

Hereafter, we give some examples of binary cubic codes such that the permutation automorphism groups have order three.

\begin{Example}\label{Dim6length18} Consider the matrix

$$\left[ \begin{array}{cccccc}
000& 111& 111&  000& 000& 111\\
111&  111& 000&  000& 111& 000\\

  E_1&\bigzero &E_2 & E_3 &E_2& \bigzero\\

  \bigzero & E_1&E_2&\bigzero&E_1&E_2\\

\end{array}
\right] =\left[ \begin{array}{c|c|c|c|c|c}
000& 111& 111&  000& 000& 111\\
111&  111& 000&  000& 111& 000\\
\hline
110& 000& 101&  110& 101& 000 \\
011& 000& 110&  011& 110&  000\\
\hline
000& 110& 101&  000& 011&  101\\
000& 011& 110&  000& 101&  110
\end{array}
\right].$$

It is straightforward to observe that this matrix generates a $6$-dimensional binary linear code $C$ of length $18$. Calculations in MAGMA show that
$\PAut(C)=\langle \sigma \rangle$. Note that the matrix $\bf{M}$ in $\left[
\begin{array}{c|c}
\bf{E}& \bf{M} \end{array}\right]$ satisfies Hypothesis B. Moreover, $E_{\sigma}(C)$ is a non-distinct $\sigma$-weight code.

\end{Example}

\begin{Example} \label{Dim10length30} Let $C$ be the binary linear $[30,10]$ code generated by the following matrix:

$$\left[ \begin{array}{cccccccccc}
  E_1&\bigzero &\bigzero&\bigzero&\bigzero&E_3 & E_1 &E_2& E_3&E_2\\

  \bigzero & E_1&\bigzero&\bigzero&\bigzero&E_2&E_1&E_1&E_1&E_2\\

\bigzero & \bigzero&E_1&\bigzero&\bigzero &E_1& E_1&E_3&E_1&\bigzero\\

\bigzero & \bigzero&\bigzero & E_1&\bigzero&E_2&E_1&E_2&E_1&E_1\\

\bigzero & \bigzero&\bigzero &\bigzero&E_1& \bigzero &E_1&E_2&E_1&E_1
\end{array}
\right]$$ which is explicitly equal to the matrix $$\left[ \begin{array}{c|c|c|c|c|c|c|c|c|c}
  101 & 000 & 000& 000&000& 110 &011&101&110&101\\
  011 & 000 & 000& 000&000& 101 &110&011&101&011\\
  \hline
  000 & 101&000&000&000&  101&011&011&011&101\\
  000 & 011&000&000&000&  011&110&110&110&011 \\

\hline
000 &000&101&000&000   &011&011&110&011&000 \\
000 &000&011&000&000   &110&110&101&110&000\\
\hline

000 &000&000 & 101&000  &101&011&101&011&011\\
000 &000&000 & 011&000  &011&110&011&101&110\\

\hline

000 &000&000 &000 &101&000  &011&101&011&011\\
000 &000&000 &000 &011&000  &110&011&101&110
\end{array}
\right].$$

One can observe that $\sigma\in \PAut(C)$ and $C=E_{\sigma}(C)$. For this cubic code, calculations in MAGMA show that
 $\PAut(C)=\langle \sigma \rangle$. The matrix $\bf{M}$ in $\left[
\begin{array}{c|c}
\bf{E}& \bf{M} \end{array}\right]$ satisfies Hypothesis B.
Moreover, $E_{\sigma}(C)$ is a non-distinct $\sigma$-weight code.

\end{Example}

\begin{Example}\label{Dim18length36}
Assume that $C$ is the binary linear $[36,18]$ code generated by the following matrix:

$$\left[ \begin{array}{cccccccccccc}
111&000 &111&111&000&111&000& 111 &000& 111&111&111\\

111& 000& 111 &000 &111&  000& 111& 111& 000& 111& 111& 111\\

000& 000& 000& 000& 000&  000& 111& 111&111& 000& 111& 000\\

111& 111 &111& 111& 111&  111& 000& 111& 000& 111& 000& 000\\

111& 000& 111&111& 000&  111& 111& 000& 000&111& 111& 000\\

111& 111& 000&000& 111& 111& 111& 000& 111& 000& 111& 111\\

000& 111& 111& 000& 111&  111& 000& 111& 000& 111& 000& 111\\

  E_1&\bigzero &\bigzero&\bigzero&\bigzero&\bigzero&E_3 & E_1 &\bigzero& E_3&E_2&E_2\\

  \bigzero & E_1&\bigzero&\bigzero&\bigzero&\bigzero&E_2&E_1&E_1&E_1&E_2&E_2\\

\bigzero & \bigzero&E_1&\bigzero&\bigzero &\bigzero &E_1& E_1&E_3&E_1&\bigzero&\bigzero\\

\bigzero & \bigzero&\bigzero & E_1&\bigzero&\bigzero &E_2&E_1&E_2&E_1&E_1&E_1\\

\bigzero & \bigzero&\bigzero &\bigzero&E_1& \bigzero& \bigzero  &E_1&E_2&E_1&E_1&\bigzero\\

\bigzero & \bigzero&\bigzero &\bigzero&\bigzero& E_1 &E_1&E_1&E_2&E_1&E_1&E_2
\end{array}
\right]$$

Computations in MAGMA show that we have $\PAut(C)=\langle \sigma \rangle$. Moreover, $E_{\sigma}(C)$ is a non-distinct $\sigma$-weight code.
\end{Example}

Considering the features obtained in the previous examples, stating the following conjecture seems logical.

\begin{Conjecture} Let $C$ be a binary linear code such that $\sigma \in \PAut(C)$. If $\PAut(C)=\langle \sigma \rangle$, then $E_{\sigma}(C)$ is a
non-distinct $\sigma$-weight code.

\end{Conjecture}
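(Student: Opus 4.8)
The plan is to prove the contrapositive: assuming that $E_{\sigma}(C)$ is a distinct $\sigma$-weight code, I would construct a non-trivial involution $\theta\in\PAut(C)$, which forces $\PAut(C)\neq\langle\sigma\rangle$. Following the template of Theorem~\ref{thm:2} and Proposition~\ref{permM}, the target involution should fix $F_{\sigma}(C)$ pointwise while acting non-trivially on $E_{\sigma}(C)$, so that checking $\theta\in\PAut(C)$ reduces to its effect on a basis of $E_{\sigma}(C)$ together with a basis of $F_{\sigma}(C)$. Concretely, I would first invoke Lemma~\ref{rref} to put a generating matrix of $E_{\sigma}(C)$ in the form $[\,\mathbf{E}\mid\mathbf{M}\,]$ with $\mathbf{E}=\mathrm{diag}(E_1,\dots,E_1)$, fix a basis $z_1,\dots,z_d$ of $F_{\sigma}(C)$ (each $z_\ell$ having all blocks in $\{000,111\}$), and record, for every block $\Omega_j$, both its $E_{\sigma}$-pattern (the $j$th block-column of the matrix) and its $F_{\sigma}$-signature in $\{0,1\}^{d}$.

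The second step generalizes the counting of Proposition~\ref{prop:1} from a single pair $\{v,w\}$ to the whole basis. For $\dim E_{\sigma}(C)=2k$ I would introduce, for each pair of basis vectors and each power of $\sigma$, the coincidence multiplicities analogous to $l_0,l_1,l_2,s,t,r$, and use the weight identities together with the hypothesis that distinct $\langle\sigma\rangle$-orbits have distinct weights to force a large spread among these multiplicities, in the spirit of Corollary~\ref{cor:4}. As in the two cases of Proposition~\ref{pro:313}, the aim is to locate a family $S$ of at least three blocks on which all $E_{\sigma}$-basis vectors restrict to a common configuration: either a common non-zero block shared by several basis vectors (as in Case~1), or a common zero block (as in Case~2). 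This is exactly the situation in which a single transposition, or a full block swap, can be arranged to preserve every $E_{\sigma}$-basis vector.

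The third step is the pigeonhole that converts $S$ into an automorphism. Since each $z_\ell$ is constant on each block, two blocks of $S$ with the same $F_{\sigma}$-signature yield positions $e,f$ at which every basis vector of $C$, both the $E_{\sigma}$- and the $F_{\sigma}$-generators, takes equal values; the transposition $(e,f)$ (or the corresponding block swap) then lies in $\PAut(C)$ and differs from the powers of $\sigma$, exactly as in Proposition~\ref{pro:313}. When $\dim F_{\sigma}(C)=1$ the signature space has only the two values $000,111$, so three blocks in $S$ already force a repetition and the argument closes essentially verbatim as in the five-dimensional case.

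The hard part will be controlling the interaction with $F_{\sigma}(C)$ when $\dim F_{\sigma}(C)$ is large: the $F_{\sigma}$-signature then ranges over $\{0,1\}^{d}$, so to guarantee two blocks of $S$ with equal signature one needs $|S|>2^{d}$ rather than merely $|S|\geq 3$, and it is not clear that the distinct-weight hypothesis by itself supplies a block-multiplicity growing fast enough with $d$. Moreover, for $k\geq 3$ the coincidence conditions tying together all $2k$ basis vectors on a single block are far more restrictive than the single condition $v|_{\Omega_j}=w|_{\Omega_j}$ used in dimension five, so the bookkeeping in the second step becomes delicate. Establishing, from the distinct-weight constraint alone, a same-configuration block-multiplicity that dominates $2^{\dim F_{\sigma}(C)}$, or finding an altogether different way to neutralize the $F_{\sigma}$-signatures, is the crux, and is where I expect the argument to stall; this is presumably why the statement is recorded only as a conjecture.
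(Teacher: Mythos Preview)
The paper does not prove this statement: it is recorded precisely as a \emph{Conjecture}, motivated by Examples~\ref{Dim6length18}, \ref{Dim10length30}, and \ref{Dim18length36}, in each of which $\PAut(C)=\langle\sigma\rangle$ and $E_{\sigma}(C)$ turns out to be a non-distinct $\sigma$-weight code. There is therefore no proof in the paper to compare your proposal against.

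Your outline is a sensible extrapolation of the machinery used for dimensions four and five (Propositions~\ref{prop:1}, \ref{prop:2}, \ref{pro:313}), and you correctly diagnose where it breaks down. Two genuine gaps deserve emphasis. First, even in the case $C=E_{\sigma}(C)$ (so $d=0$), the step from $k=2$ to $k\geq 3$ is not just bookkeeping: in Proposition~\ref{pro:313} the set $S$ is produced from a single coincidence condition $v|_{\Omega_j}=w|_{\Omega_j}$, whereas for $k$ basis pairs you would need a block on which all $k$ restrictions agree simultaneously, and the distinct-weight hypothesis only controls pairwise weight differences. There is no mechanism in your outline that upgrades pairwise coincidence counts to a $k$-fold coincidence. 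Second, your pigeonhole bound $|S|>2^{d}$ against the $F_{\sigma}$-signatures is exactly the obstruction, and the distinct-weight hypothesis gives no obvious leverage on $d=\dim F_{\sigma}(C)$ at all; nothing in the paper's weight identities even mentions $F_{\sigma}(C)$ beyond its block-constancy. Your closing remark that ``this is presumably why the statement is recorded only as a conjecture'' is accurate: the paper offers no proof, and your proposal, while a natural line of attack, does not close either gap.
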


\section{Concluding Results}

\begin{Theorem} Let $C$ be a binary linear self-dual $[72,36,16]$ code. With the notations mentioned in the previous sections, if $ \PAut(C)=\langle
\sigma \rangle$, then the matrix ${\bf{M}}$ in the generating matrix
$\left[
\begin{array}{c|c}
\bf{E}& \bf{M} \end{array}\right]$ of $E_{\sigma}(C)$ satisfies Hypothesis B.

\end{Theorem}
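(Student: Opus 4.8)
The plan is to recognize this theorem as an immediate specialization of Corollary \ref{ConditionB} to the $[72,36,16]$ setting, so that the work reduces to verifying that the hypotheses of the general framework of Section~4 are satisfied.

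First I would check that the notation from Section~4 applies. Since $\PAut(C)=\langle \sigma \rangle$ has order three, $\sigma$ is an automorphism of order three of the self-dual $[72,36,16]$ code $C$. By the cited result of \cite{SB2004}, a putative extremal self-dual $[72,36,16]$ code admits no order-three automorphism with fixed points; hence $\sigma$ is fixed point free. Consequently $C$ is cubic of length $72=3\cdot 24$, and after conjugating into standard form we may take $\sigma=(1,2,3)(4,5,6)\cdots(70,71,72)$, so the entire apparatus built around Lemma \ref{rref} is available. Note that self-duality and the specific parameters enter only through this invocation of \cite{SB2004}.

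Next I would apply Lemma \ref{rref} to write a generating matrix of $E_{\sigma}(C)$ in the form $\left[\begin{array}{c|c}\bf{E}& \bf{M}\end{array}\right]$, where $\bf{E}={\rm diag}(E_1,\dots,E_1)$ and every block of $\bf{M}$ lies in $\{E_0,E_1,E_2,E_3\}$. Since Hypotheses A and B are mutually exclusive and jointly exhaustive descriptions of the block structure of $\bf{M}$, it suffices to rule out Hypothesis~A.

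Finally I would invoke Corollary \ref{ConditionB}, which is the contrapositive of Proposition \ref{permM}: if $\bf{M}$ were to satisfy Hypothesis~A, then the construction in the proof of Proposition \ref{permM} would produce a non-identity involution $\alpha\beta\in\PAut(C)$, forcing $|\PAut(C)|\geq 6$ and contradicting $\PAut(C)=\langle\sigma\rangle$. Hence $\bf{M}$ satisfies Hypothesis~B. There is no substantive obstacle in this argument—the theorem is a direct corollary of the machinery of Section~4, and the only code-specific ingredient is the fixed-point-freeness of order-three automorphisms guaranteed by \cite{SB2004}.
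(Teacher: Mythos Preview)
Your proposal is correct and follows essentially the same approach as the paper, which simply writes ``Follows from Corollary \ref{ConditionB}.'' Your additional invocation of \cite{SB2004} to justify that $\sigma$ is fixed point free is a reasonable clarification, though in the paper this is already built into the standing notation of Section~4 (where $\sigma$ is taken to be $(1,2,3)\cdots(n-2,n-1,n)$).
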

\begin{proof} Follows from Corollary \ref{ConditionB}.
\end{proof}
\begin{Remark}
It is worth mentioning that if $C$ is a self-dual binary linear code and $\sigma\in\PAut(C)$, then $E_{\sigma}(C)$ and $F_{\sigma}(C)$ are
self-orthogonal.

Whenever we want to construct a self-orthogonal $E_{\sigma}(C)$ for which the matrix ${\bf M}$ in  $\left[
\begin{array}{c|c}
\bf{E}& \bf{M} \end{array}\right]$ satisfies Hypothesis B, we obtain very large permutation automorphism groups.
\end{Remark}

\begin{Example}
The matrix
$$\left[ \begin{array}{cccccccccc}
111& 111& 111& 111& 111& 111& 111& 111& 111& 111\\

111& 111& 111& 111& 111& 111& 111& 111& 000& 000\\

111& 111& 111& 111& 111& 111& 000& 000& 000& 000\\

000& 000& 000& 000& 111& 111& 111& 111& 111& 111\\

111& 111& 000& 000& 000& 000& 111& 111& 000& 000\\

  E_1&\bigzero &\bigzero&\bigzero&\bigzero&E_3&E_3 & E_1 &E_1& E_2\\

  \bigzero & E_1&\bigzero&\bigzero&\bigzero&E_2&E_2&E_3&E_2&E_2\\

\bigzero & \bigzero&E_1&\bigzero&\bigzero &E_3 &E_1& E_3&\bigzero&\bigzero\\

\bigzero & \bigzero&\bigzero & E_1&\bigzero&E_2 &E_2&E_3&E_1&E_3\\

\bigzero & \bigzero&\bigzero &\bigzero&E_1& E_2& E_1  &E_1&\bigzero&\bigzero

\end{array}
\right]$$
generates a self-dual code $C$ of length $30$ and dimension $15$. Calculations in MAGMA show that the order of $\PAut(C)$ is $49152$.

\end{Example}
\begin{Example}
The matrix
$$\left[ \begin{array}{cccccccccc}
111& 111& 111& 111& 111& 111& 111& 111& 111& 111\\

111& 111& 000& 000& 000& 000& 000& 000& 000& 000\\

000& 000& 111& 111& 000& 000& 000& 000& 000& 000\\

111& 111& 000& 000& 111& 111& 000& 000& 000& 000\\

111& 111& 000& 000& 000& 000& 111& 000& 000& 111\\

  E_1&\bigzero &\bigzero&\bigzero&\bigzero&E_3&E_3 & E_1 &E_1& E_2\\

  \bigzero & E_1&\bigzero&\bigzero&\bigzero&E_2&E_2&E_3&E_2&E_2\\

\bigzero & \bigzero&E_1&\bigzero&\bigzero &E_3 &E_1& E_3&\bigzero&\bigzero\\

\bigzero & \bigzero&\bigzero & E_1&\bigzero&E_2 &E_2&E_3&E_1&E_3\\

\bigzero & \bigzero&\bigzero &\bigzero&E_1& E_2& E_1  &E_1&\bigzero&\bigzero

\end{array}
\right]$$
generates a binary self dual code  of length $30$ and of dimension $15$. Calculations in MAGMA show that order of
$\PAut(C)$ is $1440$.
\end{Example}

All these information motivate us to present the following conjecture:
\begin{Conjecture}
If $C$ is  a binary linear self-dual $[72,36,16]$ code such that $\sigma \in \PAut(C)$, then $ \PAut(C)\neq\langle \sigma \rangle$.

\end{Conjecture}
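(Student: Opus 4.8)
The plan is to use Huffman's decomposition together with the theory of self-dual cubic codes to pin down the two summands $F_{\sigma}(C)$ and $E_{\sigma}(C)$ almost completely, and then to show that the remaining rigidity is incompatible with $\PAut(C)=\langle\sigma\rangle$. Here $n=72=3\cdot 24$, so $\sigma$ has $m=24$ cycles, and by \cite{SB2004} any order-$3$ automorphism is already fixed point free, so the hypothesis is genuinely that the automorphism group equals $C_3$. Since $C$ is self-dual, the Remark above shows $F_{\sigma}(C)$ and $E_{\sigma}(C)$ are self-orthogonal, and the construction of \cite{LingSole} identifies $F_{\sigma}(C)$ with a binary self-dual $[24,12]$ code $C_F$ (via the map collapsing each $000/111$ block to one bit, which preserves the mod-$2$ inner product because $3$ is odd) and identifies $E_{\sigma}(C)$ with a Hermitian self-dual length-$24$ code $C_E$ over $\FF_4$ (each sum-zero block $\{000,110,101,011\}$ being an $\FF_4$ on which $\sigma$ acts as multiplication by a primitive element). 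In particular $\dim F_{\sigma}(C)=12$ and $\dim E_{\sigma}(C)=24$, compatibly with Lemma \ref{cor:1}.

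First I would extract a very short candidate list from the minimum-distance hypothesis. A self-dual $[72,36]$ code can attain $d=16$ only if it is doubly-even, so $C$ is Type II. A nonzero $f\in F_{\sigma}(C)$ tripled from $x\in C_F$ has weight $3\,\wt(x)$, whence $d=16$ forces $d(C_F)\ge 6$; doubly-evenness of $C$ makes $C_F$ doubly-even, so $d(C_F)\in 4\Z$, and being at least $6$ and at most the extremal value $8$ it equals $8$. Thus $C_F$ is forced to be the extended binary Golay code, the unique doubly-even self-dual $[24,12,8]$ code. Symmetrically, a nonzero $e\in E_{\sigma}(C)$ from $y\in C_E$ has binary weight $2\,\wt_{\FF_4}(y)$, so $d=16$ forces $d(C_E)\ge 8$; as Hermitian self-dual $\FF_4$-codes have even weight and satisfy $d\le 2\lfloor n/6\rfloor+2$, which for $n=24$ reads $d\le 10$, we obtain $d(C_E)\in\{8,10\}$, leaving $C_E$ in a short classified list of near-extremal Hermitian self-dual $[24,12]_4$ codes.

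Next I would feed this into the structural obstruction already proved. By Corollary \ref{ConditionB}, if $\PAut(C)=\langle\sigma\rangle$ then the block matrix $\mathbf{M}$ in $\left[\begin{array}{c|c}\mathbf{E}&\mathbf{M}\end{array}\right]$ must satisfy Hypothesis B, for otherwise Proposition \ref{permM} produces a fixing involution. The plan is then to reproduce, at the level of the $24$ three-block patterns of a Golay-adapted basis, the coincidence-and-involution constructions of Propositions \ref{pro:313} and \ref{pro:314}: the extremality of $C_F=g_{24}$ together with the near-extremality of $C_E$ and the uniquely determined weight enumerator of the $[72,36,16]$ code should force many repeated patterns among these blocks, and from sufficiently many coincidences one assembles a product of transpositions $\alpha$ fixing the entire basis, contradicting $\PAut(C)=\langle\sigma\rangle$.

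The hard part will be the mixed codewords $f+e$ with $f,e\neq 0$. A $111$ block of $f$ added to a weight-$2$ block of $e$ collapses to weight $1$, so the requirement $d=16$ couples $C_F$ and $C_E$ in a way that does not split blockwise and is invisible to the two separate distance constraints above; this is precisely the Hypothesis B regime in which no uniform involution construction is currently known. A purely computational resolution is out of reach, since the number of candidate cubic codes is astronomical already in dimension $5$ and length $24$, as Remark \ref{Calc} records. The missing ingredient is therefore a theoretical argument showing that the joint constraints imposed by $g_{24}$, the admissible $C_E$, the gluing data of the construction of \cite{LingSole}, and the extremal weight enumerator always force an additional pattern coincidence, hence an extra automorphism. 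Supplying that argument is exactly what keeps the statement a conjecture rather than a theorem.
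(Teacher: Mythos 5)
You should first be aware that the statement you were given is stated in the paper as a \emph{Conjecture}: the paper contains no proof of it, only indirect evidence, namely Corollary~\ref{ConditionB} (Hypothesis B is forced when $\PAut(C)=\langle\sigma\rangle$), the observation in the concluding Remark that self-orthogonal choices of $E_{\sigma}(C)$ satisfying Hypothesis B seem to produce very large permutation automorphism groups, and two length-$30$ self-dual examples with $|\PAut(C)|=49152$ and $1440$. Measured against that, your opening reductions are correct and go well beyond the paper: fixed-point-freeness of an order-$3$ automorphism via \cite{SB2004}; $\dim F_{\sigma}(C)=12$ and $\dim E_{\sigma}(C)=24$ consistent with Lemma~\ref{cor:1}; the identification via \cite{LingSole}, \cite{BDNS2003} of the two summands with a binary self-dual $[24,12]$ code and a Hermitian self-dual quaternary code of length $24$; and the forcing of $C_F$ to be the extended Golay code (this is indeed established in the literature on the putative $[72,36,16]$ code, though you would need Rains' shadow bound, not cited in this paper, for the claim that $d=16$ forces Type~II). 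Two smaller caveats: Hermitian self-dual $[24,12]_4$ codes with $d\in\{8,10\}$ do not form a ``short classified list'' in the existing literature, so that step overstates what is available; and the reduction, even if completed, only constrains the pair $(C_F,C_E)$, not the gluing.

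The genuine gap is the one you name yourself in your final sentence, and it is worth seeing why it cannot be closed by the paper's own machinery, which is what your third paragraph proposes to ``reproduce.'' The coincidence-and-involution arguments of Propositions~\ref{pro:313} and \ref{pro:314} depend essentially on $\dim E_{\sigma}(C)=4$ and $\dim F_{\sigma}(C)\le 1$: with a single orbit pair $\{v,w,v^{\sigma},w^{\sigma}\}$ and one fixed vector $z$, a transposition fixing five explicit vectors fixes the whole code. Here $\dim E_{\sigma}(C)=24$ and $\dim F_{\sigma}(C)=12$, so a transposition built from a pattern coincidence between two basis vectors must simultaneously respect thirty-four others, and no counting of repeated $3$-block patterns of the type used in those propositions achieves this. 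More decisively, the paper's Examples~\ref{Dim6length18}, \ref{Dim10length30} and \ref{Dim18length36} exhibit cubic codes in the Hypothesis~B regime with $\PAut(C)=\langle\sigma\rangle$ already in dimensions $6$, $10$ and $18$; hence any ``coincidences force an extra automorphism'' argument must use self-duality and $d=16$ in an essential, non-blockwise way, and no such argument exists in the paper or is supplied by your proposal. Your text is therefore a reasonable research programme aligned with the paper's evidence, but it is a plan with its central step missing, not a proof --- which matches the status the paper itself assigns to the statement.
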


\section*{Acknowledgements}
{\.I}. TUVAY was partially supported by the Scientific and Technological Research Council of Turkey (T{\"u}bitak) through the research program
Bideb-2219.

The numerical calculations reported in this paper were partially performed at TUBITAK ULAKBIM, High Performance and Grid Computing Center (TRUBA
resources).

\end{document}